\documentclass[11pt,a4paper,twocolumn]{article}

\usepackage{authblk}

\usepackage[T1]{fontenc}
\usepackage[utf8]{inputenc}
\usepackage{microtype}
\usepackage{geometry}
\usepackage{amsmath,amssymb,amsfonts,bm,amsthm}
\usepackage{dsfont}
\usepackage{physics}
\usepackage{graphicx}
\usepackage{booktabs}
\usepackage{hyperref}
\usepackage{xcolor}
\usepackage{enumitem}
\usepackage{mathtools}
\usepackage{subcaption}

\usepackage[style=phys, articletitle=true, biblabel=brackets, backend=biber]{biblatex}
\usepackage{libertinus}

\hypersetup{
colorlinks=true,
linkcolor=blue!60!black,
citecolor=blue!60!black,
urlcolor=blue!70!black
}

\newtheorem{Definition}{Definition}
\newtheorem{Proposition}{Proposition}

\newtheorem{Theorem}{Theorem}
\newtheorem*{theoremcopy}{Theorem}
\newtheorem{Lemma}{Lemma}

\newcommand{\Herm}[1]{\textnormal{Herm}(#1)}

\title{\textbf{The statistical disturbance bound of quantum measurements}}

\author[1,2]{Ties-A. Ohst}
\author[1,2]{Sebastian Schlösser}
\author[1,2]{Roope Uola}

\affil[1]{Department of Physics and Astronomy, Uppsala University, 75120 Uppsala, Sweden}
\affil[2]{Nordita, KTH Royal Institute of Technology and Stockholm University, 10691 Stockholm, Sweden}

\date{}

\begin{document}

\twocolumn[
\maketitle

\vspace{-1cm}

Quantifying the disturbance caused by a quantum measurement typically requires detailed knowledge of the underlying measurement channel.
In this work, we introduce a statistical disturbance bound, which connects the statistical properties of a quantum measurement to the state disturbance induced by any compatible measurement channel. Specifically, we show that the average fidelity between input and output with respect to an arbitrary ensemble of pure input states is fundamentally bounded in terms of the measurement, described as a positive operator-valued measure (POVM). We further develop the weighted state exclusion technique, which enables an experimental determination of the statistical disturbance bound without requiring explicit knowledge of the measurement effects.
To see the advantages of our approach over existing information-disturbance relations, we show that our bound distinguishes between measurements with equivalent informativeness. Furthermore, we demonstrate that the weighted state exclusion technique can detect and quantify measurement-induced disturbance using state preparations that are insufficient for tomographic reconstruction of the measurement operators. Finally, we illustrate how disturbance bounds defined with respect to specific input ensembles can be used to bound an eavesdropper's guessing probability in a simple protocol for quantum randomness generation.

\vspace{1cm}
]

\section{Introduction}
Among the most remarkable features of quantum theory is the unavoidable change that a physical system undergoes when information is extracted through a measurement. This insight dates back to Heisenberg's seminal microscope thought experiment \cite{Heisenberg1927, Heisenberg1949}. With the advent of quantum communication, where classical information is encoded into quantum systems to guarantee security against eavesdropping, measurement disturbance has become a central concept in quantum information science. In particular, the security of quantum key distribution protocols \cite{Scarani2009}, most notably the BB84 protocol \cite{Bennett2014}, relies fundamentally on the disturbance inevitably caused by an adversary's measurements. This connection has motivated extensive research on quantitative information-disturbance \cite{fuchs1996, Fuchs2001, Maccone2006, Maccone2007, Buscemi2008, Kretschmann2008} and error-disturbance relations \cite{ Ozawa2003, Ozawa2005,  Busch2013, heinosaari_miyadera2013}.

A particularly fruitful approach to quantifying measurement disturbance is through the average fidelity of the measurement channel \cite{schumacher1996}. Using this figure of merit, Banaszek derived a tight trade-off between information gain and state disturbance in Ref.~\cite{banaszek2001}, inspiring a number of subsequent developments \cite{cheong2012, Shitara2016, Lee2021, terashima2025}. One notable feature of Banaszek's relation is that it allows measurement-induced disturbance to be detected and quantified solely from the observed measurement statistics, without requiring access to the post-measurement states or an explicit characterisation of the measurement channel. However, these information-disturbance relations characterise disturbance only through the informativeness of the measurement and therefore do not address a more fundamental question: to what extent does the complete statistical description of a measurement constrain the disturbance that any compatible measurement channel must induce?

In this work, we answer this question by introducing the statistical disturbance bound, which provides a tight upper bound on the average fidelity, with respect to an arbitrary ensemble of pure input states, achievable among measurement channels that are compatible with a given positive operator-valued measure (POVM). The bound therefore establishes a direct link between the statistical description of a quantum measurement and the disturbance that it necessarily induces. When the POVM is known, the statistical disturbance bound can be computed efficiently by solving a semidefinite program. Moreover, we introduce the weighted state exclusion technique, illustrated in Fig.~\ref{fig:weighted_state_exclusion}, which enables an experimental determination of the bound without requiring explicit knowledge of the measurement effects.

\begin{figure}
    \centering
    \includegraphics[width=\linewidth]{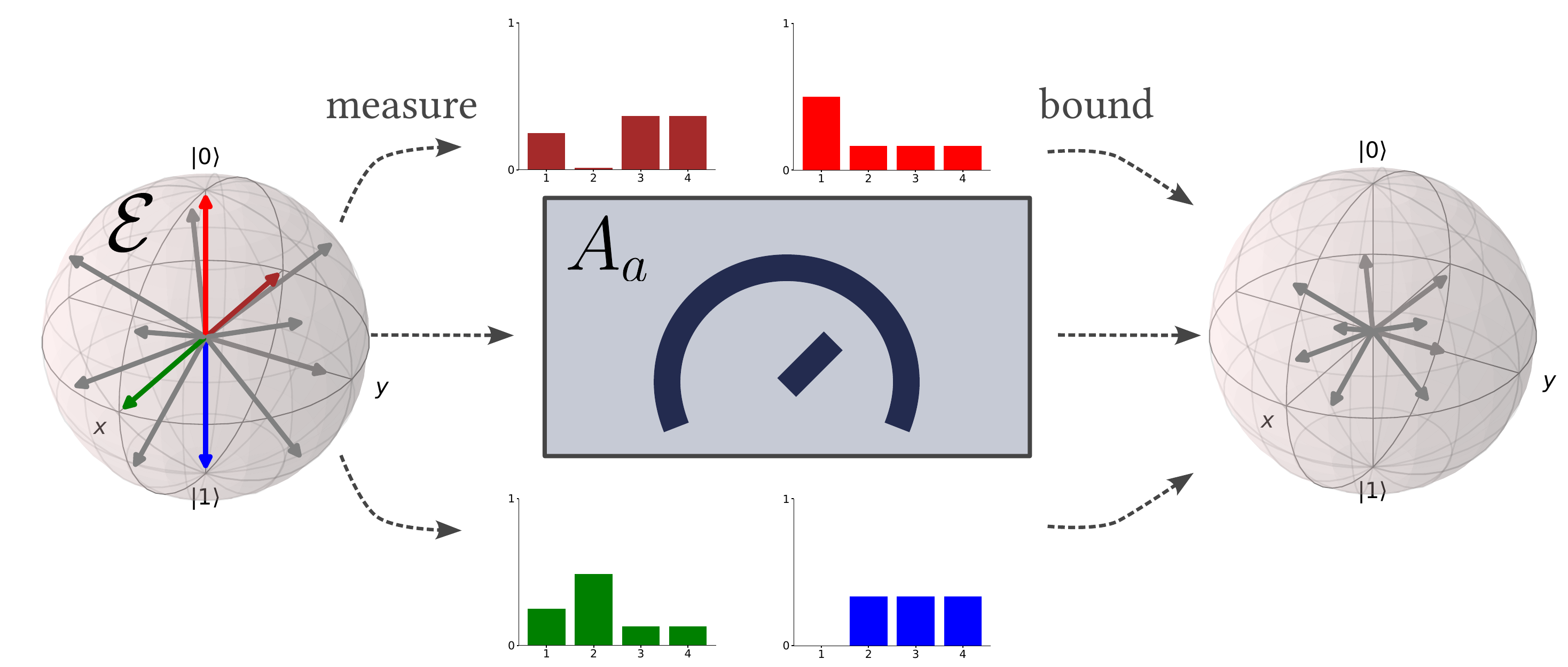}
    \caption{Sketch of the statistical disturbance estimation of a measurement with uncharacterised measurement effects $A_a$. A sufficiently large set of probe states $\rho_i$ (colored arrows) is prepared and measured to determine the corresponding conditional probabilities $p(a|i)$ (colored bar charts). These measurement statistics are then processed using the weighted state exclusion technique to obtain an upper bound on the maximum average fidelity achievable by any compatible measurement channel with respect to the input ensemble $\mathcal{E}$.}
    \label{fig:weighted_state_exclusion}
\end{figure}

The paper is organised as follows. In Section~\ref{sec:stat_dis_bound}, we review the elements of quantum measurement theory required to formulate the statistical disturbance bound. Section~\ref{sec:computation} presents its efficient computation via semidefinite programming when the POVM is known. For Haar-distributed input states, we derive a closed-form expression that reveals a direct connection between the ranks of the measurement operators and the corresponding measurement disturbance.

In Section~\ref{sec:measurement_of_sdb}, we turn to the experimentally relevant scenario in which the measurement device is not fully characterised. After reviewing in Section~\ref{sec:info_disturbance} how existing information-disturbance relations can be used for disturbance estimation, we introduce the weighted state exclusion technique in Section~\ref{sec:weighted_state_exclusion} as a significantly more powerful alternative. Since the method relies on trusted state preparations, Section~\ref{sec:preparation_uncertainties} shows how the resulting bounds can be made robust against preparation uncertainties.

Section~\ref{sec:applications} illustrates the advantages of our approach through three representative examples. In Section~\ref{sec:loss_vs_noise}, we demonstrate that the statistical disturbance bound distinguishes between information-equivalent measurements affected by depolarising noise and those subject to particle loss, a distinction that conventional information-disturbance relations fail to capture. In Section~\ref{sec:sic}, we explicitly show that the weighted state exclusion technique can detect measurement disturbance without requiring a tomographic reconstruction of the POVM. In Section~\ref{sec:amplitude_damping}, we investigate non-Haar input ensembles and show that the Lüders instrument does not always give rise to the least disturbing measurement channel compatible with a given POVM. 

Finally, in Section~\ref{sec:randomness}, we establish a connection between the average fidelity with respect to a particular input ensemble and a simple quantum randomness generation protocol, where the average fidelity directly yields an upper bound on an eavesdropper's guessing probability.

\section{Measurement disturbance and the statistical disturbance bound}
\label{sec:stat_dis_bound}

The disturbance caused by a quantum measurement refers to the extent to which the measurement alters the state of the measured system. Mathematically, the dynamics of a quantum measurement with $m\in\mathbb{N}$ outcomes acting on a $d$-dimensional system are described by an \textit{instrument}, that is, a collection $\bm{\mathcal{I}}=\{\mathcal{I}_a\}_{a=0}^{m-1}$ of \textit{completely positive} maps
$\mathcal{I}_a:\Herm{d}\rightarrow\Herm{d'}$
such that the map
$\mathcal{C}=\sum_a\mathcal{I}_a$
is \textit{trace-preserving}, i.e., the maps sum to a proper \textit{quantum channel}. Here, $\Herm{d}$ denotes the space of \textit{Hermitian} operators acting on $\mathbb{C}^d$.

Since our focus is on the change of quantum states, represented by \textit{positive semidefinite} operators $\rho\in\Herm{d}$ satisfying $\Tr(\rho)=1$, we restrict our attention to dimension-preserving instruments and therefore assume $d'=d$. Given an input state $\rho$, the corresponding non-selective state update is
\begin{equation*}
    \rho\mapsto\sum_a\mathcal{I}_a(\rho)=\mathcal{C}(\rho).
\end{equation*}
Whenever $\mathcal{C}(\rho)\neq\rho$, we say that the state $\rho$ is disturbed by the measurement $\bm{\mathcal{I}}$.

Rather than considering disturbance for a single input state, it is often more natural to study disturbance with respect to an ensemble of quantum states. Such an ensemble $\mathcal{E}$ is specified by a probability measure $\mu_{\mathcal{E}}(\psi)$ over the set of normalised pure states $\ket{\psi}\in\mathbb{C}^d$ satisfying $\braket{\psi}=1$. A central example considered throughout this work is the \textit{Haar ensemble} $\mathcal{H}$, which is induced by the normalised Haar measure on the unitary group $U(d)$; see Ref.~\cite{Mele2024} for an introduction to the Haar measure in quantum information theory. Intuitively, the Haar ensemble describes the absence of any prior knowledge about the prepared state and therefore assigns equal weight to all pure states. Another important class consists of finite ensembles, specified by pure states $\{\ket{\psi_i}\}_{i=1}^{n}$ with prior probabilities $\{p_i\}_{i=1}^{n}$. In this case,
$\mu_{\mathcal{E}}(\psi)=\sum_{i=1}^{n}p_i\delta(\psi-\psi_i)$,
where $\delta$ denotes the Dirac delta measure.

To quantify the disturbance induced by an instrument $\bm{\mathcal{I}}=\{\mathcal{I}_a\}_{a=0}^{m-1}$ with measurement channel $\mathcal{C}=\sum_a\mathcal{I}_a$, the \textit{$\mathcal{E}$-average fidelity} \cite{nielsen00} is defined as
\begin{equation}
    \label{eq:average_fidelity_integral}
    F_{\mathcal{E}}(\mathcal{C}) = \int d \mu_{\mathcal{E}}(\psi) \bra{\psi} \mathcal{C}(\ketbra{\psi}) \ket{\psi}
\end{equation}
where the integral is performed with respect to the probability measure $\mu_{\mathcal{E}}$.
The quantity $F_{\mathcal{E}}(\mathcal{C})$ measures the average overlap between the input state and the corresponding post-measurement state. It therefore quantifies the average state preservation with respect to the ensemble $\mathcal{E}$, while smaller values of $F_{\mathcal{E}}(\mathcal{C})$ indicate stronger measurement-induced disturbance.

Besides their dynamical description in terms of instruments, quantum measurements are commonly characterised by their statistical properties. From this perspective, a measurement is described by a \textit{positive operator-valued measure} (POVM), namely a collection of positive semidefinite operators $\bm{A}=\{A_a\}_{a=0}^{m-1}\subset\Herm{d}$ satisfying
$\sum_{a}A_a=\mathds{1}_d$.
For an input state $\rho$, the probability of obtaining outcome $a$ is given by the \textit{Born rule}
\begin{equation*}
    p(a|\rho) = \Tr(A_a \rho).
\end{equation*}
The statistical description provided by a POVM is strictly weaker than the dynamical description provided by an instrument. Indeed, every instrument determines a unique POVM through
\begin{equation}
    \label{eq:instrument_to_povm}
    A_a = \mathcal{I}_{a}^{\dagger}(\mathds{1}_d)
\end{equation}
where $\mathcal{I}_{a}^{\dagger}$ is the adjoint of the map $\mathcal{I}_{a}$ with respect to the Hilbert-Schmidt inner product.
Equation~\eqref{eq:instrument_to_povm} therefore allows the statistical description of a measurement to be inferred from its dynamics. The converse, however, is not true: a POVM does not uniquely determine a compatible instrument. In fact, every instrument $\bm{\mathcal{I}}$ compatible with a given POVM $\bm{A}$ via Eq.~\eqref{eq:instrument_to_povm} can be expanded in terms of the \textit{generalised Lüders rule}, see for instance \cite{Hayashi2006},
\begin{equation}
    \label{eq:gen_lüders_rule}
    \mathcal{I}_{a}(\rho) = \mathcal{C}_{a}\left(\sqrt{A_a} \rho \sqrt{A_a}\right),
\end{equation}
where $\mathcal{C}_a$ is an arbitrary quantum channel that may depend on the measurement outcome $a$. When $\mathcal{C}_a=\mathrm{id}_d$ for every outcome, the resulting instrument is known as the \textit{Lüders instrument} \cite{Busch1998}.

Our goal is to understand how the statistical description of a measurement constrains the disturbance that any compatible dynamical measurement implementation, i.e., an instrument of the form \eqref{eq:gen_lüders_rule}, must induce. More specifically, we seek fundamental upper bounds on the average fidelity $F_{\mathcal{E}}(\mathcal{C})$ that depend only on the POVM $\bm{A}$. Equivalently, given a measurement $\bm{A}$, we ask for the compatible instrument that maximises the average fidelity with respect to an ensemble $\mathcal{E}$, or, in other words, minimises the average disturbance.

A qualitative version of this question has previously been investigated in Ref.~\cite{heinosaari_miyadera2013} using ordering relations between quantum measurements and quantum channels. Here we develop a quantitative approach by introducing the \textit{statistical disturbance bound}.

\begin{Definition}
Let $\bm{A} = \{A_{a}\}_{a=0}^{m-1} \subset \Herm{d}$ be a POVM. Then, its statsitical disturbance bound $F_{\mathcal{E}}(\bm{A})$ with respect to the ensemble of pure states, specified by a probability measure $\mu_{\mathcal{E}}$, is defined as the solution of the following optimisation problem.  
    \begin{align}
    F_{\mathcal{E}}(\bm{A}) = \sup_{\bm{\mathcal{I}}} & \sum_{a=0}^{m-1}\int d\mu_{\mathcal{E}}(\psi) \bra{\psi} \mathcal{I}_{a}(\ketbra{\psi}) \ket{\psi} \label{eq:stat_dist_opt_def} \\
    \text{s.t. }  &\mathcal{I}_{a} \in \text{CP}(d,d), \mathcal{I}_{a}^{\dagger}(\mathds{1}_d) = A_a  \nonumber,
\end{align}
where $\text{CP}(d,d)$ denotes the set of completely positive maps between two $d$-dimensional quantum systems.
\end{Definition}

In words, $F_{\mathcal{E}}(\bm{A})$ is the largest average fidelity that can be achieved by any instrument compatible with the POVM $\bm{A}$. It therefore quantifies the minimum disturbance that is fundamentally compatible with the statistical description of the measurement described by the POVM $\bm{A}$.

In the next section, we show how this optimisation problem can be solved when the POVM $\bm{A}$ is known explicitly. We then explain how the statistical disturbance bound can be estimated experimentally without explicit knowledge of the measurement operators by means of the weighted state exclusion technique.

\section{Computation of the statistical disturbance bound}
\label{sec:computation}

The key insight underlying the computation of the statistical disturbance bound defined in Eq.~\eqref{eq:stat_dist_opt_def} is that the corresponding optimisation problem can be formulated as a semidefinite program (SDP). SDPs form a well-studied class of convex optimisation problems \cite{Boyd2004} with numerous applications in quantum information theory \cite{Watrous2018,Skrzypczyk2023}. They can be solved efficiently using numerical solvers \cite{Borchers1999CSDPAC} and, in some cases, even admit analytic solutions.

To derive an SDP formulation of Eq.~\eqref{eq:stat_dist_opt_def}, we employ the well-known Choi isomorphism \cite{Choi1972}, which associates every completely positive map $\mathcal{I}:\Herm{d} \rightarrow \Herm{d'}$ 
uniquely with a positive semidefinite operator $I \in  \Herm{d\cdot d'}$ via
\begin{equation*}
 \mathcal{I} \mapsto I = \frac{1}{d}\sum_{i,j=0}^{d-1} \ketbra{i}{j} \otimes \mathcal{I} (\ketbra{i}{j}).
\end{equation*}
With that, the optimisation problem \eqref{eq:stat_dist_opt_def} can be reformulated as a semidefinite program in standard form.
To do so, we introduce the \textit{ensemble operator} $R(\mathcal{E}) \in \Herm{d^2}$ defined by 
\begin{equation}
    \label{eq:ensemble_operator}
    R(\mathcal{E}) = d \int d\mu_{\mathcal{E}}(\psi) \ketbra{\psi}^T \otimes \ketbra{\psi},
\end{equation}
associated with any ensemble of pure states $\mathcal{E}$, where $X^T$ is the transpose of $X$. For instance, in the case of the $d$-dimensional Haar ensemble $\mathcal{H}$ one obtains from Ref.~\cite{horodeckis_singlet99} the ensemble operator 
\begin{equation}
     \label{eq:haar_ensemble_operator}
    R(\mathcal{H}) = \frac{\mathds{1}_{d^2} + d\ketbra{\phi^{+}}}{d+1}.
\end{equation}
with the maximally entangled state $\ket{\phi^{+}} = \frac{1}{\sqrt{d}} \sum_{i} \ket{i} \otimes \ket{i}$.
In terms of the ensemble operator, the average fidelity in Eq.~\eqref{eq:average_fidelity_integral} can be written as
\begin{equation}
    \label{eq:average_fidelity_choi}
    F_{\mathcal{E}}(\mathcal{I}) = \Tr(R(\mathcal{E}) I).
\end{equation}
Using the representation in Eq.~\eqref{eq:average_fidelity_choi}, we can reformulate the statistical disturbance bound via
\begin{align}
    F_{\mathcal{E}}(\bm{A}) = \sup_{I_a} &\sum_{a=0}^{m-1} \Tr(R(\mathcal{E})I_{a}) \label{eq:primal_fidelity_sdp}\\
    \text{s.t. }  &I_{a} \succcurlyeq 0, (\Tr_{\rm 2}(I_a))^T = A_{a}/d,  \nonumber
\end{align}
where we have used that the condition $\mathcal{I}^{\dagger}(\mathds{1}_d) = A_a$ is equivalent to $(\Tr_{\rm 2}(I))^T = A_a/d$ in terms of the Choi-operator $I$ of $\mathcal{I}$ and where $\Tr_2:\Herm{d\cdot d} \rightarrow \Herm{d}$ denotes the partial trace on the second factor. 
The optimisation problem in Eq.~\eqref{eq:primal_fidelity_sdp} is an SDP in standard form \cite{Watrous2018}, whose dual problem is
\begin{align}
     \inf_{Y_a} & \frac{1}{d}\sum_{a=0}^{m-1} \Tr(Y_{a} A_a) \label{eq:dual_fidelity_sdp} \\
    \text{s.t. }  &Y_a^T \otimes \mathds{1}_d \succcurlyeq R(\mathcal{E}). \label{eq:dual_constraints}
\end{align}
Weak duality of semidefinite programs \cite{Boyd2004} implies that every feasible choice of the dual variables $Y_a$ provides an upper bound 
\begin{equation*}
    \frac{1}{d}\sum_{a=0}^{m-1} \Tr(Y_{a} A_a)
\end{equation*}
on the optimal value of the primal problem \eqref{eq:primal_fidelity_sdp}. Furthermore, since $R(\mathcal{E})$ is positive semidefinite with trace $d$, the constraint
\begin{equation*}
   Y_a^T \otimes \mathds{1}_d \succcurlyeq R(\mathcal{E})
\end{equation*}
is strictly feasible, for example by choosing $Y_a=y_a\mathds1_d$ with $y_a>d$. Slater's condition \cite{Slater2013} therefore guarantees strong duality, implying that the optimal values of the primal and dual problems coincide.

The dual formulation not only enables efficient numerical computation, but also provides a powerful tool for obtaining analytic solutions. We illustrate this by deriving a closed-form expression for the Haar ensemble.
For that ensemble, using Eq.~\eqref{eq:haar_ensemble_operator}, the average fidelity in the case of the Lüders instrument $\mathcal{I}_a^{L}(\rho) = \sqrt{A_a} \rho \sqrt{A_a}$ is computed as 
\begin{align}
    \sum_{a=0}^{m-1} F_{\mathcal{H}}(\mathcal{I}_a^{L}) &= \sum_{a=0}^{m-1}\Tr(R(\mathcal{H}) I_a^{L}) \nonumber\\
    &= \sum_{a=0}^{m-1} \frac{\Tr(A_a) + \Tr(\sqrt{A_a})^2}{d(d+1)} \nonumber\\
    &= \frac{\frac{1}{d}\left[\sum_{a=0}^{m-1} \Tr(\sqrt{A_a})^2\right] + 1}{d+1}. \label{eq:haar_fidelity_of_meas}
\end{align}
In Ref.~\cite{banaszek2001}, Eq.~\eqref{eq:haar_fidelity_of_meas} was shown to be optimal among instruments whose elements each admit a single Kraus operator. We strengthen this result by proving that the same expression is optimal over all instruments compatible with the POVM $\bm A$.
\begin{Theorem}
    \label{thm:lüders_optimality}
    Let $\bm{A} = \{A_{a}\}_{a=0}^{m-1} \in \Herm{d}$ be a $d$-dimensional POVM. 
    Then, the statistical disturbance bound $F_{\mathcal{H}}(\bm{A})$ of $\bm{A}$ with respect to the Haar ensemble $\mathcal{H}$ is given by 
    \begin{equation}
        \label{eq:lüders_disturbance_bound}
        F_{\mathcal{H}}(\bm{A}) =  \frac{\frac{1}{d} \left[\sum_{a=0}^{m-1} \Tr (\sqrt{A_a})^2\right]+1}{d+1}.
    \end{equation}
    In particular, the disturbance is minimised by the Lüders instrument $\mathcal{I}_a^{L}(\rho) = \sqrt{A_a} \rho \sqrt{A_a}$.
\end{Theorem}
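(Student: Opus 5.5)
The plan is a direct primal argument: write an arbitrary compatible instrument in Kraus form, compute its Haar-average fidelity in closed form, and bound the Kraus-dependent term by an operator-norm/trace-norm duality. The Lüders instrument then attains the bound by Eq.~\eqref{eq:haar_fidelity_of_meas}. This avoids having to guess a dual feasible point for \eqref{eq:dual_fidelity_sdp}.

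\emph{Step 1 (fidelity of a general instrument).} Let $\bm{\mathcal{I}}$ be compatible with $\bm{A}$. Write each element as $\mathcal{I}_a(\rho)=\sum_k K_{a,k}\rho K_{a,k}^\dagger$ with $\sum_k K_{a,k}^\dagger K_{a,k}=A_a$. The Choi operator is a sum of rank-one terms $\tfrac1d\,\mathrm{vec}(K)\mathrm{vec}(K)^\dagger$, one per Kraus operator. Inserting this into Eq.~\eqref{eq:average_fidelity_choi} with $R(\mathcal{H})$ from Eq.~\eqref{eq:haar_ensemble_operator} gives
\begin{equation*}
F_{\mathcal{H}}(\mathcal{I}_a)=\frac{\Tr(A_a)+\sum_k |\Tr K_{a,k}|^2}{d(d+1)},
\end{equation*}
which is the same computation already used for Eq.~\eqref{eq:haar_fidelity_of_meas}. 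Summing over $a$ and using $\sum_a\Tr A_a=d$, it therefore suffices to show, for every outcome $a$,
\begin{equation*}
\sum_k |\Tr K_{a,k}|^2\le \bigl(\Tr\sqrt{A_a}\bigr)^2 .
\end{equation*}

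\emph{Step 2 (polar decomposition of the Kraus family).} Fix $a$ and define the stacked operator $V=\sum_k \ket{k}\otimes K_{a,k}$. Then $V^\dagger V=A_a$, so the polar decomposition gives $V=W\sqrt{A_a}$ with $W$ an isometry; on the kernel of $A_a$ one may extend $W$ arbitrarily to an isometry. Setting $W_k=(\bra{k}\otimes\mathds{1}_d)W$, we get $K_{a,k}=W_k\sqrt{A_a}$.

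\emph{Step 3 (the key inequality).} Regard $(\Tr K_{a,k})_k$ as a vector in $\ell^2$. Its norm is
\begin{equation*}
\Bigl(\sum_k |\Tr K_{a,k}|^2\Bigr)^{1/2}=\sup_{\|c\|=1}\Bigl|\Tr\bigl(M_c\sqrt{A_a}\bigr)\Bigr|,\qquad M_c=\sum_k \bar c_k W_k=(c^\dagger\otimes\mathds{1}_d)\,W .
\end{equation*}
Since $\|c^\dagger\otimes\mathds{1}_d\|_\infty=1$ and $\|W\|_\infty\le1$, we have $\|M_c\|_\infty\le 1$. Hölder's inequality $|\Tr(MS)|\le\|M\|_\infty\|S\|_1$, with $S=\sqrt{A_a}\succcurlyeq0$ so that $\|S\|_1=\Tr\sqrt{A_a}$, gives the claimed bound. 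Summing over $a$ shows that every compatible instrument has Haar fidelity at most the right-hand side of Eq.~\eqref{eq:lüders_disturbance_bound}.

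\emph{Step 4 (attainment).} The Lüders instrument ($K_{a}=\sqrt{A_a}$, $W=\mathds{1}$) achieves the bound, by Eq.~\eqref{eq:haar_fidelity_of_meas}. Hence the supremum in Eq.~\eqref{eq:stat_dist_opt_def} equals Eq.~\eqref{eq:lüders_disturbance_bound} and is attained by $\mathcal{I}^L$.

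The main obstacle is Step 3. One has to see that multiple Kraus operators per outcome cannot beat a single one, and the way to do this is to collect them through a single isometry before applying Hölder's inequality. If this step proved problematic, the fallback would be to extract a dual certificate for \eqref{eq:dual_fidelity_sdp}, of the form $Y_a=\frac{1}{d+1}\bigl(\mathds1+\Tr(\sqrt{A_a})\,A_a^{-1/2}\bigr)$ on the support of $A_a$, and verify the constraint \eqref{eq:dual_constraints} directly.
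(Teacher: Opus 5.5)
Your proof is correct, and it takes a genuinely different route from the paper.

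The paper argues entirely on the dual side. It proves a lemma stating that $Y\otimes\mathds{1}_d\succcurlyeq\Phi^+$ holds iff $Y\succ 0$ and $\Tr(Y^{-1})\le d$. It then exhibits the family $Y_{a,r}=\frac{\mathds{1}_d}{d+1}+\frac{\Tr(\sqrt{A_a})A_a^{-1/2}}{d+1-q_a/r}+rQ_a$, verifies feasibility through the lemma, and lets $r\to\infty$ to approach $F_a^{*}$. The kernel projector $Q_a$ and the limit are needed exactly when $A_a$ is rank-deficient, and weak duality then yields the upper bound.

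You instead stay on the primal side. You reduce everything to the single inequality $\sum_k|\Tr K_{a,k}|^2\le(\Tr\sqrt{A_a})^2$ and settle it with the polar decomposition $V=W\sqrt{A_a}$ and H\"older's inequality. This buys you several things:
\begin{itemize}
\item It needs no duality theory and no auxiliary lemma.
\item Kernels of $A_a$ cost nothing; you do not even need to extend $W$ to an isometry, since a contraction suffices.
\item It makes transparent why several Kraus operators per outcome cannot beat a single one. That is precisely the sense in which the theorem strengthens Banaszek's single-Kraus result.
\item Tracing the equality case of H\"older's inequality shows that every optimal instrument has $K_{a,k}\propto\sqrt{A_a}$. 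So the L\"uders instrument is the unique optimiser, which the paper's argument does not give.
\end{itemize}

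What the dual route buys in return is an explicit optimal dual variable $Y_a$, diagonal in the eigenbasis of $A_a$. This is exactly the object on which the weighted state exclusion technique of Section~\ref{sec:weighted_state_exclusion} is built: its eigenvectors are the probe states that make Theorem~\ref{thm:weighted_state_exclusion} tight for the Haar ensemble.

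Your fallback certificate agrees with the paper's $Y_{a,r}$ when $A_a$ has full rank. For rank-deficient effects it would need the same kernel term and limiting argument.
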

\begin{proof}
    The formula \eqref{eq:lüders_disturbance_bound} for $F_{\mathcal{H}}(\bm{A})$ follows from an analytic solution of the SDP \eqref{eq:primal_fidelity_sdp} by constructing dual feasible variables with respect to the constraints \eqref{eq:dual_constraints} that yields the above value \eqref{eq:lüders_disturbance_bound}. This technical construction is explicitly presented in Appendix \ref{app:lüders_optimality}.
\end{proof}
Theorem \ref{thm:lüders_optimality} provides an analytic formula for the statistical disturbance bound in the case of the Haar ensemble.
Furthermore, Theorem~\ref{thm:lüders_optimality} immediately implies that the Lüders instrument is minimally disturbing with respect to the Haar ensemble precisely when every nonzero POVM effect has rank one.
\begin{Proposition}
    \label{prop:rank_disturbance}
    Let $\bm{A} = \{A_{a}\}_{a=0}^{m-1} \in \Herm{d}$ be a $d$-dimensional POVM. Then it holds that 
    \begin{equation}
        \label{eq:statistical_disturbance_lower_bound}
        F_{\mathcal{H}}(\bm{A}) \geq \frac{2}{d+1}
    \end{equation}
    with equality if and only if $\text{rank}(A_a) \leq 1$ for all $a \in \{0,\dots,m-1\}$.
    Furthermore, whenever $\text{rank}(A_a) \leq k$ for all $a \in \{0,\dots,m-1\}$, one has the upper bound 
    \begin{equation}
        \label{eq:rank_upper_bound}
        F_{\mathcal{H}}(\bm{A}) \leq \frac{k+1}{d+1}.
    \end{equation}
\end{Proposition}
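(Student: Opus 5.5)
By Theorem~\ref{thm:lüders_optimality}, the problem reduces to bounding the single quantity $S=\sum_a \Tr(\sqrt{A_a})^2$. Indeed,
\[
F_{\mathcal{H}}(\bm{A}) = \frac{S/d+1}{d+1}.
\]
Claim \eqref{eq:statistical_disturbance_lower_bound} is then equivalent to $S\geq d$, with equality exactly in the rank-one case. Claim \eqref{eq:rank_upper_bound} is equivalent to $S\leq kd$. Both inequalities will be proved effect by effect from the eigenvalues, and then summed using $\sum_a \Tr(A_a)=\Tr(\mathds{1}_d)=d$.

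For the lower bound, fix $a$ and let $\lambda_1,\dots,\lambda_r\geq 0$ be the eigenvalues of $A_a$. Then
\[
\Tr(\sqrt{A_a})^2=\Big(\sum_j\sqrt{\lambda_j}\Big)^2=\sum_j\lambda_j+\sum_{j\neq l}\sqrt{\lambda_j\lambda_l}\geq \Tr(A_a).
\]
Summing over $a$ gives $S\geq d$, which yields $F_{\mathcal{H}}(\bm{A})\geq (1+1)/(d+1)=2/(d+1)$.

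Equality holds iff every cross term $\sqrt{\lambda_j\lambda_l}$ vanishes for every $a$, since all terms are nonnegative. This means each $A_a$ has at most one nonzero eigenvalue, i.e.\ $\text{rank}(A_a)\leq 1$. Conversely, rank at most one makes all cross terms vanish, so equality holds.

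For the upper bound, apply Cauchy--Schwarz to the at most $k$ nonzero eigenvalues of $A_a$:
\[
\Big(\sum_{j=1}^{k}\sqrt{\lambda_j}\Big)^2\leq k\sum_{j}\lambda_j = k\Tr(A_a).
\]
Summing over $a$ gives $S\leq kd$, hence $F_{\mathcal{H}}(\bm{A})\leq (k+1)/(d+1)$.

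None of these steps is a real obstacle once Theorem~\ref{thm:lüders_optimality} is assumed; the proof is elementary spectral bookkeeping. The only points needing care are:
\begin{itemize}
\item the equality characterisation, which relies on all cross terms being nonnegative, so that equality forces each one to vanish;
\item zero effects $A_a=0$, which contribute nothing and are covered by the condition $\text{rank}\leq 1$.
\end{itemize}
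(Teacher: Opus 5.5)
Your proposal is correct and follows the paper's proof: you reduce via the closed-form Haar expression to $\sum_a \Tr(\sqrt{A_a})^2$, use $(\sum_j\sqrt{\lambda_j})^2\geq\sum_j\lambda_j$ for the lower bound and its rank-one equality case, and use Cauchy--Schwarz for the rank-$k$ upper bound. The only cosmetic difference is that you apply Cauchy--Schwarz to the vector of nonzero eigenvalues, whereas the paper applies it to the Hilbert--Schmidt inner product $\Tr(\sqrt{A}\,\Pi_A)$ with the support projector; this is the same inequality.
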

\begin{proof}
    To show the first part, notice that the inequality 
    \begin{equation}
    \label{eq:trace_root_inequality}
        \Tr(\sqrt{A})^2 \geq \Tr(A)
    \end{equation}
    holds for any positive semidefinite operator because a square of a sum of positive numbers is always greater than or equal to the corresponding sum of squares of the numbers. Also, inequality \eqref{eq:trace_root_inequality} reduces to equality only if $A$ has at most one nonzero eigenvalue, i.e., $\text{rank}(A) \leq 1$. Applying the inequality \eqref{eq:trace_root_inequality} to each effect $A_a$ in Eq.~\eqref{eq:lüders_disturbance_bound} yields the lower bound \eqref{eq:statistical_disturbance_lower_bound}. 

    To show the upper bound \eqref{eq:rank_upper_bound}, assuming that $\text{rank}(A) \leq k$ we can apply the Cauchy-Schwarz inequality for the Hilbert-Schmidt inner product via
    \begin{align}
       &\Tr(\sqrt{A})^2 = \Tr(\sqrt{A} \cdot \Pi_{A})^2 \nonumber\\ 
       &\leq \Tr(A) \cdot \Tr(\Pi_{A}) \leq k \Tr(A), \nonumber
    \end{align}
    where $\Pi_{A}$ denotes the projector onto the support of $A$. Applying this inequality to each POVM element in Eq.~\eqref{eq:lüders_disturbance_bound} immediately yields Eq.~\eqref{eq:rank_upper_bound}.
\end{proof}
Proposition~\ref{prop:rank_disturbance} admits several relevant remarks. First, Eq.~\eqref{eq:rank_upper_bound} also follows from the fidelity criterion of Ref.~\cite{morelli2023}, which was originally derived to detect the Schmidt number of quantum states. We nevertheless include the above proof because of its remarkable simplicity as a direct consequence of Eq.~\eqref{eq:lüders_disturbance_bound}. 

Second, the lower bound in Eq.~\eqref{eq:statistical_disturbance_lower_bound} is not the trivial lower bound satisfied by arbitrary quantum channels.
As can be seen from Eq.~\eqref{eq:haar_ensemble_operator} and \eqref{eq:average_fidelity_choi}, the latter universal bound is $1/(d+1)$ and is achieved by any channel whose Choi operator is orthogonal to $\ket{\phi^+}$. Consequently, any channel with Haar-average fidelity below $2/(d+1)$ cannot, by Proposition~\ref{prop:rank_disturbance}, be represented using only positive semidefinite Kraus operators.

We have therefore obtained a particularly simple expression for the statistical disturbance bound in the case of Haar-distributed input states. Besides enabling efficient evaluation, Eq.~\eqref{eq:lüders_disturbance_bound} reveals a direct connection between the ranks of the POVM elements and the disturbance induced by the measurement. For general ensembles $\mathcal E$, however, no analogous closed-form expression is known, and the statistical disturbance bound must be computed numerically by solving the SDP in Eq.~\eqref{eq:primal_fidelity_sdp}. In the next section, we show how this bound can be estimated experimentally even when the POVM itself is unknown.

\section{Measurement of the statistical disturbance bound}
\label{sec:measurement_of_sdb}

In the previous section, we showed that the statistical disturbance bound can be computed by solving a semidefinite program whenever the POVM $\bm{A}$ describing the measurement is known. In principle, this makes the bound experimentally accessible by first reconstructing the POVM using detector tomography \cite{Feito2009} and then evaluating the optimisation problem in Eq.~\eqref{eq:primal_fidelity_sdp}, or equivalently Eq.~\eqref{eq:lüders_disturbance_bound} for Haar-distributed input states.

In practice, however, this approach is unsatisfactory. Finite sampling errors in the tomographic reconstruction of the POVM propagate through the nonlinear post-processing required to evaluate the statistical disturbance bound, leading to substantial uncertainties. Our goal in this section is therefore to develop methods that estimate the disturbance directly from experimentally accessible measurement statistics, thereby avoiding explicit detector tomography.

\subsection{Disturbance bounds from information-disturbance relations}
\label{sec:info_disturbance}

Existing approaches for estimating measurement disturbance without reconstructing the post-measurement states are based on information-disturbance relations \cite{banaszek2001, Maccone2007, Kretschmann2008,Shitara2016}. Their central idea is closely related to the uncertainty principle: measurements that extract a large amount of information about certain quantum states must necessarily disturb at least some of those states.

Along these lines, a particularly important and in certain sense complete relation has been derived by Banaszek in Ref.~\cite{banaszek2001} in the form of a balance between the so-called estimation and operation fidelites of a measurement.

The estimation fidelity $E(\bm A)$, also called the informativeness of the measurement, quantifies how well classical information encoded into quantum states can be recovered from a single measurement outcome.
In mathematical terms, this corresponds to the solution of the optimisation problem
\begin{align}
    E(\bm{A}) = &\sup_{\rho_{a}} \frac{1}{d} \sum_{a=0}^{m-1} \Tr(A_a \rho_a) \label{eq:informativeness_optimisation}\\
    &\;\,\text{s.t.} \; \rho_{a} \succcurlyeq 0, \Tr(\rho_a) = 1, \nonumber
\end{align}
where the prefactor $1/d$ is chosen to guarantee that the attainable values of $E(\bm{A})$ lie in the interval $[\frac{1}{d}, 1]$. A related notion of the informativeness of a measurement has been recently introduced in Ref.~\cite{skrzypczyk2019} in the context of resource theories of quantum measurements.
Since maximizing $\Tr(A\rho)$ over all states $\rho$ simply selects the largest eigenvalue of $A$, the optimisation problem in Eq.~\eqref{eq:informativeness_optimisation} admits the closed-form solution
\begin{align}
    \label{eq:informativeness_eigvals}
    E(\bm{A}) &= \frac{1}{d} \sum_{a=0}^{m-1} \lambda_{\rm{max}}(A_{\rm a}).
\end{align}
The operational interpretation of $E(\bm A)$ makes it straightforward to estimate experimentally.
The idea is to prepare enough samples of states $\rho_i$ in order to infer the statistics $p(a|i) = \Tr(A_a \rho_i)$. Ideally, the prepared states approximate the eigenvectors corresponding to the largest eigenvalues. The sum of coincidence probabilities $p(a|a)$ divided by $d$ is then a lower bound to the informativeness $E(\bm{A})$, which can be easily seen from the definition given by the optimisation problem \eqref{eq:informativeness_optimisation}. In addition to its experimental availability, it has been shown by Banaszek in Ref.~\cite{banaszek2001} that one may use the informativeness to derive a bound to the statistical disturbance with respect to the Haar ensemble.
\begin{Proposition}[Banaszek, 2001 \cite{banaszek2001}]
    \label{prop:bana_info_dist}
    Let $\bm{A} = \{A_a\}_{a=0}^{m-1}$ be a POVM on $\mathbb{C}^{d}$ and let $\mathcal{H}$ be the $d$-dimensional Haar ensemble. Then it holds that 
    \begin{equation}
        \label{eq:banaszek_inequality}
        F_{\mathcal{H}}(\bm{A}) \leq B_{d}(E(\bm{A}))
    \end{equation}
    where $E(\bm{A})$ is the informativeness of $\bm{A}$ and $B_d(x)$ is the function defined by 
    \begin{equation}
        \label{eq:banasezk_function}
        B_d(x) = \frac{1+\frac{1}{d}\left[\sqrt{dx}+\sqrt{(d-1)(d-dx)}\right]^2}{d+1}.
    \end{equation}
    Furthermore, the inequality \eqref{eq:banaszek_inequality} is saturated if and only if the eigenvalues $\lambda_{a,i} := \lambda_{i}(A_a)$ (arranged in decreasing order in $i$) with $ i \in \{0,\dots, d-1\}$ are such that the vectors $\bm{v}_i : = (\sqrt{\lambda_{0,i}}, \dots, \sqrt{\lambda_{m-1,i}}) \in \mathbb{R}^m$ collecting the square roots of all $i$'th eigenvalues are all pairwise linearly dependent and all the vectors corresponding to the non-leading eigenvalues have the same length, i.e., $\norm{\bm{v}_1}_2= \norm{\bm{v}_2}_2 = \dots = \norm{\bm{v}_{d-1}}_2$.
\end{Proposition}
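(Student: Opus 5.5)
By Theorem~\ref{thm:lüders_optimality}, it suffices to show
\begin{equation*}
S(\bm A):=\frac{1}{d}\sum_{a}\Tr(\sqrt{A_a})^2 \le \frac{1}{d}\left[\sqrt{dE}+\sqrt{(d-1)(d-dE)}\right]^2,
\end{equation*}
where $E=E(\bm A)=\frac1d\sum_a\lambda_{a,0}$ by Eq.~\eqref{eq:informativeness_eigvals}. The plan is to write everything in terms of the eigenvalue vectors $\bm v_i$ of Proposition~\ref{prop:bana_info_dist}.

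First, observe that $\Tr(\sqrt{A_a})=\sum_i\sqrt{\lambda_{a,i}}$. Hence
\begin{equation*}
\sum_a\Tr(\sqrt{A_a})^2=\Big\|\sum_{i=0}^{d-1}\bm v_i\Big\|_2^2 .
\end{equation*}
Two constraints are available. The first is $\|\bm v_0\|_2^2=\sum_a\lambda_{a,0}=dE$. The second comes from taking the trace of $\sum_aA_a=\mathds 1_d$, which gives $\sum_i\|\bm v_i\|_2^2=d$, and therefore $\sum_{i\ge1}\|\bm v_i\|_2^2=d-dE$.

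The triangle inequality in $\mathbb R^m$ then gives
\begin{equation*}
\Big\|\sum_i\bm v_i\Big\|_2\le\|\bm v_0\|_2+\sum_{i\ge1}\|\bm v_i\|_2 .
\end{equation*}
Applying Cauchy--Schwarz to the $d-1$ remaining terms yields
\begin{equation*}
\sum_{i\ge1}\|\bm v_i\|_2\le\sqrt{(d-1)\sum_{i\ge1}\|\bm v_i\|_2^2}=\sqrt{(d-1)(d-dE)}.
\end{equation*}
Combining these and using $\|\bm v_0\|_2=\sqrt{dE}$ gives $S(\bm A)\le\frac1d[\sqrt{dE}+\sqrt{(d-1)(d-dE)}]^2$. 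Inserting this into Eq.~\eqref{eq:lüders_disturbance_bound} gives exactly $B_d(E(\bm A))$, which proves Eq.~\eqref{eq:banaszek_inequality}.

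The saturation condition follows from the equality cases of the two inequalities.

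\emph{Triangle inequality.} Since all $\bm v_i$ have nonnegative entries, equality holds iff all nonzero $\bm v_i$ are positively proportional. For nonnegative vectors this is the same as pairwise linear dependence, with zero vectors counted as dependent.

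\emph{Cauchy--Schwarz.} Equality holds iff $\|\bm v_1\|_2=\dots=\|\bm v_{d-1}\|_2$.

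For the converse, assume both conditions hold. Then both inequalities are equalities, so the bound is attained, again using Theorem~\ref{thm:lüders_optimality}.

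The main obstacle is the degenerate cases in the ``only if'' direction. If $E=1$, the non-leading vectors all vanish, and the conditions hold trivially. The other issue is interpreting linear dependence for nonnegative vectors correctly, i.e.\ checking that dependence forces positive proportionality so that the triangle inequality is indeed saturated. Both are straightforward but should be stated explicitly.
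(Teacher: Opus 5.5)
Your proof is correct. The paper does not prove this proposition itself; it imports it from Banaszek's work. So there is no in-paper argument to compare against, and your derivation stands as a self-contained proof.

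It rests on Theorem~\ref{thm:lüders_optimality}. That theorem is proved independently in Appendix~\ref{app:lüders_optimality}, so there is no circularity.

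The core of your argument has four steps:
\begin{itemize}
\item rewrite $\sum_a(\Tr\sqrt{A_a})^2$ as $\norm{\sum_i\bm v_i}_2^2$;
\item use the two constraints $\norm{\bm v_0}_2^2=dE(\bm A)$ and $\sum_i\norm{\bm v_i}_2^2=d$, the latter coming from $\sum_a A_a=\mathds 1_d$;
\item apply the triangle inequality;
\item apply Cauchy--Schwarz to the $d-1$ non-leading lengths.
\end{itemize}
This is exactly the argument that the stated saturation conditions encode, and it is essentially Banaszek's original reasoning. Pairwise linear dependence is the equality case of the triangle inequality for entrywise nonnegative vectors. Equal lengths of $\bm v_1,\dots,\bm v_{d-1}$ is the equality case of Cauchy--Schwarz.

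Your equality analysis is also right. You correctly note that for nonnegative nonzero vectors, linear dependence forces a \emph{positive} proportionality constant. You also handle the degenerate case $E(\bm A)=1$, where all non-leading vectors vanish.

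The one difference from Banaszek is where the reduction to eigenvalues comes from. Banaszek's setting is instruments whose elements have a single Kraus operator $K_a$, as the paper remarks before Theorem~\ref{thm:lüders_optimality}. There the eigenvalue expression arises from the bound $|\Tr K_a|\le\Tr\sqrt{K_a^\dagger K_a}$. You instead get the eigenvalue expression for the supremum over \emph{all} compatible instruments directly from Theorem~\ref{thm:lüders_optimality}.

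This makes the ``if and only if'' clause clean as a statement about $F_{\mathcal H}(\bm A)$ itself. Both sides of the inequality are then explicit functions of the spectra of the $A_a$. Saturation is therefore precisely the simultaneous equality in your two elementary inequalities.
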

    Ref.~\cite{banaszek2001} further showed that Proposition~\ref{prop:bana_info_dist} is tight. Specifically, for every $x\in[1/d,1]$, there exists a POVM $\bm{A}$ satisfying
    \begin{equation*}
        E(\bm{A}) = x,\;  F_{\mathcal{H}}(\bm{A}) = B_{d}(x)
    \end{equation*}
Consequently, Proposition~\ref{prop:bana_info_dist} provides the optimal disturbance bound obtainable from the informativeness alone.
    
    Next to that, what makes Proposition \ref{prop:bana_info_dist} practically useful is the fact that the function $B_d(x)$ in Eq.~\eqref{eq:banasezk_function} is monotonically decreasing for $x \in [\frac{1}{d}, 1]$. This implies that every experimentally obtained lower bound on $E(\bm{A})$ gives rise to an upper bound of $B_d(E(\bm{A}))$ and hence by the inequality \eqref{eq:banaszek_inequality} also to a valid upper bound of the statistical disturbance bound $F_{\mathcal{H}}(\bm{A})$ with respect to the Haar ensemble. Since it is practically possible to measure lower bounds to the informativeness by the reasoning described above, the information-disturbance relation by Banaszek provides a strong method to estimate the disturbance of an uncharacterised quantum measurement. 

    Despite its practical usefulness, this approach has two important limitations. First, it applies only to the Haar ensemble, restricting the class of disturbance phenomena that can be investigated. Second and more importantly, the bound in Eq.~\eqref{eq:banaszek_inequality} is generally far from tight for a given measurement, as we shall demonstrate in Section~\ref{sec:loss_vs_noise}. These shortcomings motivate the weighted state exclusion technique for a direct measurement of the statistical disturbance bound.

\subsection{Weighted quantum state exclusion}
\label{sec:weighted_state_exclusion}
The weighted state exclusion technique is based on a simple observation: the dual SDP in Eq.~\eqref{eq:dual_fidelity_sdp} admits a natural operational interpretation.
In fact, the optimisation variables $Y_a \in \Herm{d}$ in the problem \eqref{eq:dual_fidelity_sdp} may be identified as some unnormalised quantum state associated to the operator $A_a$.
Interpreting the dual variables as unnormalised quantum states reveals a close connection with the quantum state exclusion problem introduced in Ref.~\cite{Bandyopadhyay2014}.

To make this connection precise, suppose that there is a set of $n$ different states $\bm{\rho} = \{\rho_{i}\}_{i=1}^{n} \subset \Herm{d}$ that are prepared with high fidelity, e.g., the pure states of a preferred orthornormal basis, and measured with the uncharacterised POVM $\bm{A}$. After sufficiently many repetitions, we can estimate the set of conditional probability distributions $p(a|i) = \Tr(A_a \rho_i)$. If we restrict the dual variables to admit the decomposition
\begin{equation}
    \label{eq:dual_variable_state_deco}
    Y_a = \sum_{i=1}^{n} q_{ai} \, \rho_{i}
\end{equation}
with some expansion coefficients $q_{ai} \in \mathbb{R}$, we have also access to the values $\Tr(A_{a} Y_a)$ via the simple post-processing
\begin{equation}
    \label{eq:single_outcome_dual_objective_function}
    \Tr(A_{a} Y_a) = \sum_i q_{ai} \Tr(A_{a} \rho_{i}) = \sum_i q_{ai} \,p(a|i). 
\end{equation}
This means that we can obtain an upper bound to $F_{\mathcal{E}}(\bm{A})$ by considering the optimisation problem \eqref{eq:dual_fidelity_sdp} and invoking the additional restriction \eqref{eq:dual_variable_state_deco} posed on the optimisation variables $Y_a$. Since the numbers $q_{ai}$ are free variables, one can furthermore minimise the objective function \eqref{eq:single_outcome_dual_objective_function} over them,
and the resulting value will be an upper bound to $F_{\mathcal{E}}(\bm{A})$. 
For reasons that become apparent in the next subsection, we furthermore require that the expansion coefficients $q_{ai}$, also referred to as weights, are all non-negative. 
The previous discussion together with the form of the dual SDP \eqref{eq:dual_fidelity_sdp} directly imply the following Theorem.

\begin{Theorem}
    \label{thm:weighted_state_exclusion}
    Let $\bm{A} = \{A_{a}\}_{a=0}^{m-1} \subset \Herm{d}$ be a POVM, $\mathcal{E}$ an ensemble of pure states and let $ \bm{\rho} = \{\rho_{i}\}_{i=1}^{n} \subset \Herm{d}$ be a list of distinct quantum states. Then, the upper bound given by
    \begin{equation}
        \label{eq:weighted_state_exclusion_inequality}
        F_{\mathcal{E}}(\bm{A}, \bm{\rho}) \geq F_{\mathcal{E}}(\bm{A})
    \end{equation}
holds, where the value $F_{\mathcal{E}}(\bm{A}, \bm{\rho})$ is defined by  
    \begin{align}
      F_{\mathcal{E}}(\bm{A}, \bm{\rho}) := &\inf_{q_{ai}} \frac{1}{d}\sum_{a=0}^{m-1} \sum_{i=1}^{n} q_{ai} \Tr(A_{a} \rho_{i}) \label{eq:weighted_state_exclusion} \\
      &\, \textnormal{w.r.t. } q_{ai} \in \mathbb{R}_{+} \nonumber\\
      &\; \textnormal{s.t. } \sum_{i=1}^{n} q_{ai} \, \rho_{i}^{T} \otimes \mathds{1}_d \succcurlyeq R(\mathcal{E}) \;\, \forall a. \label{eq:ideal_dual_constraint}
    \end{align}
    Furthermore, with a suitable choice of states $\rho_i$ the inequality \eqref{eq:weighted_state_exclusion_inequality} can always be made tight.
\end{Theorem}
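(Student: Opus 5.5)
The inequality \eqref{eq:weighted_state_exclusion_inequality} follows from weak duality once we note that the feasible set of \eqref{eq:weighted_state_exclusion} embeds into that of the dual SDP \eqref{eq:dual_fidelity_sdp}. Given nonnegative weights $q_{ai}$ satisfying \eqref{eq:ideal_dual_constraint}, set $Y_a := \sum_i q_{ai}\rho_i$. Since transposition is linear, $Y_a^T\otimes\mathds{1}_d = \sum_i q_{ai}\,\rho_i^T\otimes\mathds{1}_d \succcurlyeq R(\mathcal{E})$, so $\{Y_a\}$ satisfies \eqref{eq:dual_constraints}. The dual objective evaluates to
\[
\frac{1}{d}\sum_a \Tr(Y_aA_a) = \frac{1}{d}\sum_{a,i} q_{ai}\Tr(A_a\rho_i).
\]
This is exactly the objective in \eqref{eq:weighted_state_exclusion}. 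By weak duality, each such value bounds the primal value \eqref{eq:primal_fidelity_sdp} from above, and the primal value equals $F_{\mathcal{E}}(\bm{A})$. Taking the infimum over admissible $q_{ai}$ gives $F_{\mathcal{E}}(\bm{A},\bm{\rho})\geq F_{\mathcal{E}}(\bm{A})$. If no admissible weights exist, the infimum is $+\infty$ and the inequality is trivial. In fact the constraint is always satisfiable whenever the $\rho_i$ are states: $q_{ai}=d$ for one $i$ and zero otherwise gives $d\,\rho_i^T\otimes\mathds{1}$, which need not dominate $R(\mathcal{E})$. A sufficient remedy for this is addressed in the tightness step below.

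For tightness I would use strong duality, which the paper already established via Slater's condition: there are optimal dual variables $Y_a^\star$ with $\frac1d\sum_a\Tr(Y_a^\star A_a)=F_{\mathcal{E}}(\bm{A})$. If the infimum is not attained, I would take an $\epsilon$-optimal feasible family instead and handle the limit at the end. The task is then to represent each $Y_a^\star$ as a nonnegative combination of states. First observe that $Y_a^\star\succcurlyeq 0$. To see this, apply $\bra{v}\otimes\bra{w}\,\cdot\,\ket{v}\otimes\ket{w}$ to the constraint $Y_a^{\star T}\otimes\mathds{1}\succcurlyeq R(\mathcal{E})\succcurlyeq 0$. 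This gives $\bra{v}Y_a^{\star T}\ket{v}\geq 0$ for all $v$, so $Y_a^{\star T}\succcurlyeq0$ and hence $Y_a^\star\succcurlyeq 0$. If $Y_a^\star=0$, set $q_{ai}=0$. Otherwise define the state $\rho_a := Y_a^\star/\Tr(Y_a^\star)$ and the weight $q_{aa}:=\Tr(Y_a^\star)\geq 0$, with all other weights zero. The list $\bm{\rho}$ then consists of the $m$ states $\rho_a$, with duplicates merged; if $Y_a^\star\propto Y_b^\star$ we simply sum the weights, which preserves distinctness. With this choice $\sum_i q_{ai}\rho_i=Y_a^\star$ exactly, so the constraints \eqref{eq:ideal_dual_constraint} hold and the objective equals $F_{\mathcal{E}}(\bm{A})$. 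Combined with the first part, this gives equality.

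The main point requiring care is attainment of the dual optimum. Slater's condition on the dual guarantees equality of optimal values and attainment on the primal side, but not necessarily on the dual side. I would argue attainment directly by compactness. Feasibility together with $Y_a\succcurlyeq0$ gives nonnegative objective terms. Restricting to $Y_a$ with objective at most that of the strictly feasible point $y\mathds{1}$ bounds $\Tr(Y_aA_a)$. This bounds $Y_a$ on the support of $A_a$. Off that support, $Y_a$ can be projected onto $\mathrm{supp}(A_a)\oplus$ (a bounded multiple of the identity on the complement) without increasing the objective. A cleaner route is simply to choose, off the support, $Y_a$ large enough to stay feasible while costing nothing, so only the compressions matter. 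If this becomes messy, I would fall back on $\epsilon$-optimal $Y_a^{(\epsilon)}$, obtaining tightness up to arbitrary $\epsilon$ through a suitable choice of states, which already justifies the statement that $F_{\mathcal{E}}(\bm{A},\bm{\rho})$ can be made equal to $F_{\mathcal{E}}(\bm{A})$ in the limit.
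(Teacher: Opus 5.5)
Your proof of the inequality is correct and is exactly the paper's argument: setting $Y_a=\sum_i q_{ai}\rho_i$ maps every admissible weight choice to a dual-feasible point with the same objective, and weak duality finishes. (Your aside that the constraint is ``always satisfiable'' contradicts itself and is false. For the Haar ensemble and $\bm\rho=\{|0\rangle\langle 0|\}$ the left-hand side is singular while $R(\mathcal H)\succ 0$. Your $+\infty$ convention already covers this, so it is harmless.)

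The gap is in the tightness part. Your main line, like the paper's one-line remark about ``eigenvectors of the optimal dual variables'', needs a dual minimiser $Y_a^\star$. Your compactness argument for its existence fails, because the dual optimum is in general \emph{not} attained.

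Take the Haar ensemble, $d=2$, $A_0=|0\rangle\langle 0|$. By Lemma~\ref{lemma:trace_inverse}, feasibility means $Z=\tfrac32(Y^T-\tfrac13\mathds{1}_2)\succ 0$ with $\Tr Z^{-1}\le 2$. Since $(Z^{-1})_{00}\ge 1/Z_{00}$ and $(Z^{-1})_{11}>0$, you get $Z_{00}>\tfrac12$ strictly, i.e.\ $\tfrac12\Tr(YA_0)>\tfrac13$ for every feasible $Y$. Yet the family $Y_{0,r}$ from the appendix proof of the Haar formula approaches $\tfrac13$ as $r\to\infty$.

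The step that breaks is ``off that support, $Y_a$ can be projected onto $\mathrm{supp}(A_a)\oplus$ a bounded multiple of the identity''. The off-diagonal blocks between $\mathrm{supp}(A_a)$ and $\ker A_a$ cannot be dropped without inflating the support block, and the kernel block diverges along every minimising sequence. Your compactness argument is fine when every nonzero $A_a$ has full rank, since then $\Tr(Y_aA_a)\ge\lambda_{\min}(A_a)\Tr Y_a$; in that case your one-state-per-outcome choice $Y_a^\star/\Tr Y_a^\star$ is a valid, leaner alternative to the paper's eigenvector decomposition. Your $\epsilon$-fallback, however, only gives $\epsilon$-dependent lists $\bm\rho_\epsilon$ with $F_{\mathcal E}(\bm A,\bm\rho_\epsilon)\le F_{\mathcal E}(\bm A)+\epsilon$. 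That is weaker than equality for a single list.

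The missing idea is that attainment is unnecessary. $F_{\mathcal E}(\bm A,\bm\rho)$ is itself an infimum, so you only need one fixed finite list whose cone contains a minimising sequence.

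\begin{enumerate}
\item Let $P_a,Q_a$ project onto $\mathrm{supp}(A_a)$ and $\ker A_a$. Split $\mathbb{C}^d\otimes\mathbb{C}^d=\mathcal S\oplus\mathcal S^\perp$ with $\mathcal S=\mathrm{ran}(P_a^T)\otimes\mathbb{C}^d$, and write $R_{jk}$ for the corresponding blocks of $R(\mathcal E)$.
\item Compressing the dual constraint to $\mathcal S$ shows that $X=P_aY_aP_a$ satisfies the closed relaxed condition $X^T\otimes\mathds{1}_d\succcurlyeq R_{11}$ on $\mathcal S$, with unchanged objective.
\item Since $A_a$ is invertible on its support, this relaxed problem has compact sublevel sets and hence a minimiser $X_a^*\succcurlyeq 0$.
\item By a Schur complement, $X_a^*+\eta P_a+sQ_a$ is dual feasible for every $\eta>0$ once $s\ge\|R_{22}\|+\|R_{12}\|^2/\eta$. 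Its objective is $\frac1d[\Tr(X_a^*A_a)+\eta\Tr A_a]$, so the dual infimum equals the relaxed minimum.
\item All these operators are nonnegative combinations of a fixed set: an eigenbasis of $X_a^*$ inside $\mathrm{supp}(A_a)$ together with an orthonormal basis of $\ker A_a$. Taking $\bm\rho$ to be the union of these projectors over all $a$ gives $F_{\mathcal E}(\bm A,\bm\rho)=F_{\mathcal E}(\bm A)$ exactly, with the infimum over $q_{ai}$ approached rather than attained.
\end{enumerate}

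For the Haar ensemble this is already visible in the paper: $Y_{a,r}$ has $r$-independent eigenvectors, namely an eigenbasis of $A_a$.
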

We call the procedure of determining $F_{\mathcal{E}}(\bm{A}, \bm{\rho})$, as explained above, the weighted state exclusion technique.
The name ``weighted state exclusion'' reflects the fact that the optimisation assigns non-negative weights to a collection of trusted probe states. These weighted states define feasible dual variables $Y_a$, whose expectation values can be estimated directly from experimentally observed statistics. 
In this way, an upper bound to the statistical disturbance bound is inferred without reconstructing the POVM.

The bound $F_{\mathcal{E}}(\bm{A}, \bm{\rho})$ in Theorem~\ref{thm:weighted_state_exclusion} can always be made tight by choosing the probe states to coincide with the eigenvectors of the optimal dual variables.
In this case one has
\begin{equation*}
    \bm{\rho} = \bigcup_{a=0}^{m-1} \{\ketbra{\psi_{ai}}\}_{i=1}^{d}
\end{equation*}
where $\ket{\psi_{ai}}$ denote the eigenvectors of the variables $Y_a$ that optimise the SDP \eqref{eq:dual_fidelity_sdp} and the variables $q_{ai}$ are the corresponding eigenvalues, i.e., $Y_a = \sum_{i=1}^{d} q_{ai} \ketbra{\psi_{ai}}$ with $\braket{\psi_{ai}}{\psi_{aj}} = \delta_{ij}$.

\subsection{Robustness to preparation uncertainties}
\label{sec:preparation_uncertainties}

The main practical limitation of the weighted state exclusion technique is its reliance on accurate state preparation of the probe states $\rho_i$. 
If the experimentally prepared state $\tilde\rho_i$ differs from the assumed ideal state $\rho_i$, the constraint in Eq.~\eqref{eq:ideal_dual_constraint} is generally no longer guaranteed to hold.
As a consequence, the value $F_{\mathcal{E}}(\bm{A}, \bm{\rho})$ computed based on the experimental data might fail to be a valid upper bound to $F_{\mathcal{E}}(\bm{A})$  potentially invalidating the resulting disturbance bound.

To overcome this issue, we will now show that the value $F_{\mathcal{E}}(\bm{A}, \bm{\rho})$ can be made robust in the sense that it gives a valid bound, even in the presence of preparation uncertainties that are quantified by non-vanishing trace distances $\norm{\rho - \Tilde{\rho}}_{1} := \Tr(\sqrt{(\rho - \Tilde{\rho})^2})$. Mathematically, we use the following lemma that takes into account a non-zero trace distance between the ideal and real state. 
\begin{Lemma}
    \label{lemma:robust_dual_inequality}
   Let $ \bm{\rho} = \{\rho_{i}\}_{i=1}^{n} \subset \Herm{d}$ and $ \Tilde{\bm{\rho}} = \{\Tilde{\rho}_{i}\}_{i=1}^{n} \subset \Herm{d}$ be two lists of quantum states such that $\norm{\rho_{i} - \Tilde{\rho}_{i}}_{1} \leq \epsilon$ and let $\{q_i\}_{i=1}^{n} \subset \mathbb{R}_{+}^{n}$ be a list of non-negative numbers. Let furthermore $\mathcal{E}$ be an ensemble of quantum states with associated ensemble operator $R(\mathcal{E})$.
    Then, if the inequality
    \begin{equation}
        \label{eq:relaxed_dual_constraint}
        \sum_{i=1}^{n} q_i \, \rho_{i} \otimes \mathds{1}_{d} \succcurlyeq R(\mathcal{E})+\epsilon \sum_{i=1}^{n}q_{i} \mathds{1}_{d^2}
    \end{equation}
    is satisfied, also the inequality 
    \begin{equation}
        \label{eq:real_state_dual_constraint}
        \sum_{i=1}^{n} q_i \, \Tilde{\rho}_{i} \otimes \mathds{1}_{d} \succcurlyeq R(\mathcal{E})
    \end{equation}
    holds true. 
\end{Lemma}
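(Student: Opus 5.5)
The plan is to write the real-state operator as the ideal one plus an error term and bound the error term in operator norm. Start from
\begin{equation*}
\sum_{i=1}^{n} q_i \, \Tilde{\rho}_{i} \otimes \mathds{1}_{d} = \sum_{i=1}^{n} q_i \, \rho_{i} \otimes \mathds{1}_{d} - \sum_{i=1}^{n} q_i \, (\rho_i - \Tilde{\rho}_{i}) \otimes \mathds{1}_{d}.
\end{equation*}
The aim is to show that the subtracted error term satisfies
\begin{equation*}
\sum_{i} q_i (\rho_i - \Tilde{\rho}_i)\otimes \mathds{1}_d \preccurlyeq \epsilon \sum_i q_i \mathds{1}_{d^2}.
\end{equation*}
Given this, inequality \eqref{eq:relaxed_dual_constraint} yields \eqref{eq:real_state_dual_constraint} immediately by transitivity of the Loewner order.

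The key step is a single-term bound. For each $i$, the difference $\Delta_i := \rho_i - \Tilde{\rho}_i$ is Hermitian, so its largest eigenvalue is bounded by its operator norm:
\begin{equation*}
\lambda_{\max}(\Delta_i) \le \norm{\Delta_i}_\infty \le \norm{\Delta_i}_1 \le \epsilon .
\end{equation*}
Hence $\Delta_i \preccurlyeq \epsilon \mathds{1}_d$. Tensoring with $\mathds{1}_d$ preserves the Loewner order, because $X \succcurlyeq 0$ implies $X\otimes \mathds{1}_d \succcurlyeq 0$. This gives $\Delta_i \otimes \mathds{1}_d \preccurlyeq \epsilon \mathds{1}_{d^2}$. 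Multiplying by $q_i \ge 0$ and summing over $i$ gives the required bound on the error term. This is exactly where the non-negativity of the weights is used, which is the reason anticipated in Section~\ref{sec:weighted_state_exclusion}.

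A slightly sharper estimate is available. Since $\Tr\Delta_i = 0$, the positive part of $\Delta_i$ has trace $\tfrac12\norm{\Delta_i}_1$, so in fact $\lambda_{\max}(\Delta_i)\le \epsilon/2$. The weaker $\epsilon$ stated in the lemma suffices, however, and I would only remark on the improvement.

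There is no real obstacle in this argument. The one point to handle with care is the norm comparison $\norm{\cdot}_\infty \le \norm{\cdot}_1$, which holds because the largest singular value is at most the sum of all singular values.

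A second point to check is a notational mismatch. The lemma is stated with $\rho_i$, whereas the constraint \eqref{eq:ideal_dual_constraint} uses $\rho_i^T$. The argument is identical in both cases, since transposition preserves the spectrum and the trace norm, so $\norm{\rho_i^T-\Tilde{\rho}_i^T}_1 \le \epsilon$ as well. I would note this so that the lemma can be applied directly to \eqref{eq:ideal_dual_constraint} for each outcome $a$ with $q_i = q_{ai}$.
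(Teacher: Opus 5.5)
Your proposal is correct and follows essentially the same route as the paper: you use the trace-norm bound to get $\rho_i - \tilde{\rho}_i \preccurlyeq \epsilon\mathds{1}_d$, tensor with $\mathds{1}_d$, weight by the non-negative $q_i$, sum, and chain with the assumed inequality. Your side remarks are also correct: $\Tr(\rho_i-\tilde{\rho}_i)=0$ in fact gives the sharper bound $\epsilon/2$, and transposition preserves the trace norm, so the lemma applies directly to the constraint written with $\rho_i^T$.
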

\begin{proof}
    First, notice that the condition $\norm{\rho_{i} - \Tilde{\rho}_{i}}_{1} \leq \epsilon$ implies that all eigenvalues of the operator $\rho_{i} - \Tilde{\rho}_{i}$ lie in the interval $[-\epsilon,\epsilon]$ implying that 
    \begin{equation}
        \label{eq:one_norm_to_inequality}
        \Tilde{\rho}_{i}  \succcurlyeq  \rho_{i} - \epsilon \mathds{1}_{d}.
    \end{equation}
    Next, we can rewrite the inequality \eqref{eq:relaxed_dual_constraint} as 
    \begin{equation*}
        \sum_{i=1}^{n} q_i \, (\rho_{i} - \epsilon \mathds{1}_{d}) \otimes \mathds{1}_{d} \succcurlyeq R(\mathcal{E}).
    \end{equation*}
    Together with the inequality \eqref{eq:one_norm_to_inequality}, this implies the inequality \eqref{eq:real_state_dual_constraint}.
\end{proof}

Lemma \ref{lemma:robust_dual_inequality} allows to formulate a stronger version of the weighted state exclusion technique \eqref{eq:weighted_state_exclusion} via the optimisation problem
\begin{align}
      & F_{\mathcal{E}}(\bm{A}, \bm{\rho}, \epsilon) :=\inf_{q_{ai}} \frac{1}{d}\sum_{a=0}^{m-1} \sum_{i=1}^{n} q_{ai} \Tr(A_{a} \rho_{i}) \label{eq:epsilon_weighted_state_exclusion} \\
      & \textnormal{w.r.t. } q_{ai} \in \mathbb{R}_{+} \nonumber\\
      & \textnormal{s.t. } \sum_{i=1}^{n} q_{ai} \, \rho_{i}^{T} \otimes \mathds{1}_d \succcurlyeq R(\mathcal{E}) +\epsilon \sum_{i=1}^{n}q_{ai} \mathds{1}_{d^2} \;\, \forall a. \nonumber
\end{align}
The modified optimisation problem therefore yields a disturbance bound that remains valid even when the prepared states deviate from their ideal descriptions by at most $\epsilon$ in trace distance.
This may be summarised by the inequality 
\begin{equation*}
    F_{\mathcal{E}}(\bm{A}, \bm{\rho}, \epsilon) \geq  F_{\mathcal{E}}(\bm{A}, \bm{\rho}) \geq F_{\mathcal{E}}(\bm{A}).
\end{equation*}
Consequently, whenever one has that $\norm{\rho_i - \Tilde{\rho}_i}_1 \leq \epsilon$, observing
\begin{equation*}
    F_{\mathcal{E}}(\bm{A}, \bm{\rho}, \epsilon) < 1
\end{equation*}
certifies that the measurement necessarily disturbs the states from the ensemble $\mathcal{E}$. In case of the optimal state preparations $\bm{\rho}$ such that $F_{\mathcal{E}}(\bm{A}, \bm{\rho}) = F_{\mathcal{E}}(\bm{A})$, we use the short-hand notation $ F_{\mathcal{E}}(\bm{A}, \epsilon)$ for the bound in the presence of preparation uncertainties.

\section{Examples}
\label{sec:applications}

In this section, we illustrate the usefulness of the statistical disturbance bound through several examples.

\subsection{Sample loss implies stronger disturbance than informativeness-equivalent depolarisation noise}
\label{sec:loss_vs_noise}

\begin{figure}
    \centering
    \includegraphics[width=\linewidth]{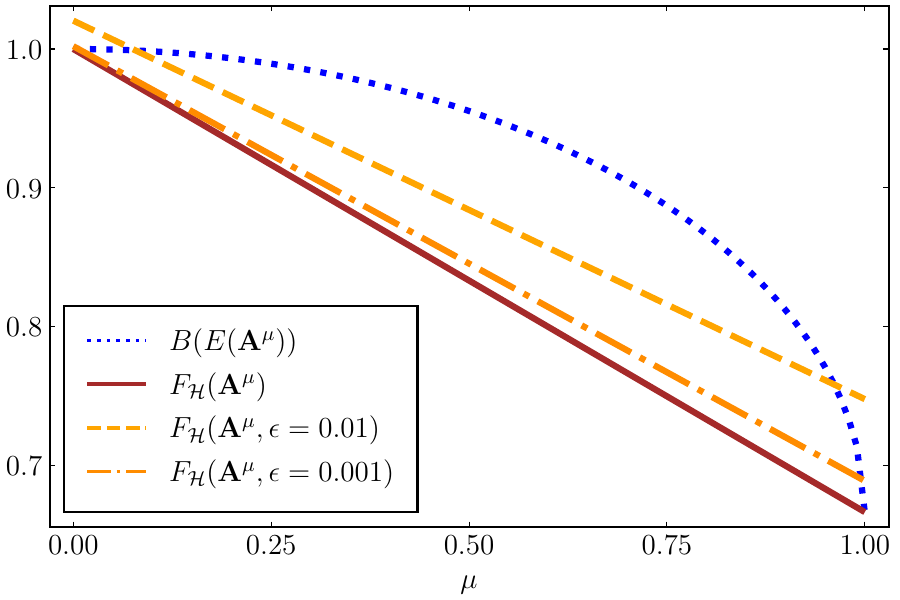}
    \caption{Comparison between depolarisation noise and sample loss for a qubit computational-basis measurement. The solid curve shows the statistical disturbance bound for a lossy measurement as a function of the transmission probability $\mu$. The dotted curve shows the information-disturbance bound $B_2(E(\bm A^\mu))$, which coincides with the statistical disturbance bound of the depolarised measurement with the same value of $\mu$. Thin solid curves illustrate the robust weighted state exclusion bounds for different preparation uncertainties $\epsilon$. Even for moderate preparation errors, the statistical disturbance bound remains significantly tighter than the information-disturbance relation over a broad parameter range.}
    \label{fig:lossy_plot}
\end{figure}

Our first example demonstrates that two measurements with identical informativeness may nevertheless induce substantially different disturbance. This illustrates a key advantage of the statistical disturbance bound over information-disturbance relations such as the one reviewed in Section~\ref{sec:info_disturbance}.

A particularly instructive example is provided by comparing depolarising noise with sample loss.
Consider the ideal projective measurement in the computational basis,
\begin{equation*}
    A_a = \ketbra{a}
\end{equation*}
for $a \in \{0,\dots,d-1\}$.
For depolarising noise with mixing parameter $\mu\in[0,1]$, the POVM becomes
\begin{equation*}
    \widetilde{A}_{a}^{\mu} = \mu \ketbra{a} + \frac{1-\mu}{d} \mathds{1}_d.
\end{equation*}
Using Eq.~\eqref{eq:informativeness_eigvals}, the informativeness is
\begin{equation}
    \label{eq:depolarisation_informativeness}
    E(\widetilde{\bm{A}}^{\mu}) = \frac{1 + \mu (d-1)}{d}.
\end{equation}
Since the vectors of ordered eigenvalues
$$\left(\mu + \frac{1-\mu}{d}, \frac{1-\mu}{d}, \dots, \frac{1-\mu}{d}\right)$$
are the same for each POVM element, and since the non-leading eigenvalues are all identical,
 Proposition~\ref{prop:bana_info_dist}  implies that the information-disturbance relation is saturated, i.e.,
$$B_d(E(\widetilde{\bm{A}}^{\mu})) = F_{\mathcal{H}}(\widetilde{\bm{A}}^{\mu}).$$
Hence, the informativeness completely determines the statistical disturbance bound, which may be computed as
\begin{align}
    \label{eq:depolarisation_stat_disturbance}
    F_{\mathcal{H}}(\widetilde{\bm{A}}^{\mu}) = \frac{\left(\sqrt{\mu + \frac{1-\mu}{d}} + (d-1) \sqrt{\frac{1-\mu}{d}}\right)^2 + 1}{d+1},
\end{align}
see Figure \ref{fig:lossy_plot}.

We now compare this with particle loss, corresponding to the situation where the sample state is lost before reaching the detector.
Equivalently, this model describes a detector with finite efficiency.
If the detector fails to click with probability $1-\mu$, the POVM is updated as $\bm{A} = \{A_a\}_{a=0}^{d-1} \mapsto \bm{A}^{\mu} =  \{A_{a}^{\mu}\}_{a=1}^{d}$ with
\begin{align}
    A_a^{\mu} &= \mu A_a \; \text{ for }  a \in \{0 \dots, d-1\} \nonumber\\
    A_{d} &= (1-\mu) \mathds{1}_d, \nonumber 
\end{align}
where the outcome $d$ refers to the ``no-click'' event. 

Both error models preserve exactly the same probability of correctly identifying the computational basis states.
Indeed, one readily verifies that the informativeness $E(\bm{A}^{\mu})$ of the lossy measurement coincides with the informativeness $E(\widetilde{\bm{A}}^{\mu})$ of the depolarised version of $\bm{A}$, see Eq.~\eqref{eq:depolarisation_informativeness}, i.e., $E(\bm{A}^{\mu}) = E(\widetilde{\bm{A}}^{\mu})$. 
Despite their identical informativeness, the two measurements exhibit different statistical disturbance bounds, as the lossy measurement satisfies
\begin{align}
    F_{\mathcal{H}}(\bm{A}^{\mu}) = 1 +  \frac{\mu (1-d)}{d+1}. \nonumber
\end{align}
Thus, the disturbance bound depends linearly on the loss parameter $\mu$, in contrast to the nonlinear dependence obtained for depolarising noise in Eq.~\eqref{eq:depolarisation_stat_disturbance}.
This difference is depicted in Figure~\ref{fig:lossy_plot}.
For every $\mu\in(0,1)$, the lossy measurement exhibits a strictly smaller average fidelity than its depolarised counterpart despite having exactly the same informativeness.
Physically, this difference originates from the fact that the informative outcomes of the lossy measurement correspond to sharp projective state updates while the depolarised measurement
keeps part of the coherence. 
Although both mechanisms reduce the informativeness by the same amount, they constrain the compatible measurement channels differently.

This example clearly demonstrates that the complete measurement statistics contain information about measurement disturbance that is not captured by informativeness alone. Consequently, the statistical disturbance bound can distinguish physically different noise mechanisms that remain indistinguishable within the conventional information-disturbance relation.

\subsection{Disturbance bounds from incomplete weighted state exclusion statistics}
\label{sec:sic}

To illustrate the application of the weighted state exclusion technique with incomplete and non-optimal state preparations, we consider the symmetric informationally complete (SIC) POVM \cite{Renes2004} for a qubit.
The POVM $\bm S$ consists of four rank-one effects $S_{a} = \frac{1}{2} \ketbra{\psi_a}$,
where the states ${\ket{\psi_a}}$ form the vertices of a regular tetrahedron on the Bloch sphere, e.g.,
\begin{align}
    \ket{\psi_0} &= \ket{0} \nonumber\\
    \ket{\psi_j} &= \frac{1}{\sqrt{3}}\ket{0} + \sqrt{\frac{2}{3}} e^{\frac{2 \pi i (j-2)}{3}}\ket{1}, j \in \{1,2,3\}. \nonumber
    \end{align}
Analogously to the depolarised measurement in the computational basis, we introduce a noisy SIC POVM $\bm{S}^{\mu}$ via the measurement effects $S_a^{\mu}$ defined as 
\begin{equation}
    \label{eq:noisy_sic}
    S_a^{\mu} = \mu S_a + \frac{1-\mu}{4} \mathds{1}_2
\end{equation}
with a noise parameter $\mu \in [0,1]$. 
By Theorem~\ref{thm:lüders_optimality}, the statistical disturbance bound with respect to the Haar ensemble coincides with that of the depolarised qubit basis measurement,
\begin{align}
    F_{\mathcal{H}}(\bm{S}^{\mu}) = \frac{\left(\sqrt{\mu + \frac{1-\mu}{2}} + \sqrt{\frac{1-\mu}{2}}\right)^2 + 1}{3}. \nonumber
\end{align}
Unlike the computational-basis measurement, the effects $S_a^\mu$ do not commute and therefore cannot be diagonalized simultaneously. Consequently, there is no single preparation basis that is naturally adapted to all measurement effects. This raises the question of whether weighted state exclusion can still detect measurement disturbance when only incomplete sets of probe states are available.

Figure~\ref{fig:sic_disturbance} demonstrates that the answer is affirmative. It shows the weighted state exclusion bounds
$F_{\mathcal{H}}(\bm{S}^{\mu}, \bm{\rho})$ for measurement disturbance obtained from the optimisation problem
\eqref{eq:weighted_state_exclusion}
for two different collections $\bm{\rho}$ of probe states. The set $\bm{\rho}_{Z}$ consists of the eigenstates of $\sigma_z$ while $\bm{\rho}_{ZX}$ consists of the union of both of the eigenstates of $\sigma_z$ and $\sigma_x$, i.e.,
\begin{align}
    \bm{\rho}_Z &= \{\ketbra{0}, \ketbra{1}\}, \nonumber \\
    \bm{\rho}_{ZX} &= \{\ketbra{0}, \ketbra{1}, \ketbra{+}, \ketbra{-}\}. \nonumber
\end{align}
Neither preparation set is informationally complete for detector tomography. In particular, reconstructing the noisy SIC POVM would additionally require measurements in a third basis, for example the eigenbasis of $\sigma_y$.
Nevertheless, Fig.~\ref{fig:sic_disturbance} shows that both preparation sets yield weighted state exclusion bounds that remain strictly below $1$ for sufficiently large values of $\mu$. Although these bounds are necessarily weaker than the exact statistical disturbance bound,
the values $F_{\mathcal{H}}(\bm{S}^{\mu}, \bm{\rho})$ remain significantly below $1$ for large enough $\mu$, thereby certifying that measurement disturbance can nevertheless be detected. This illustrates a key practical distinction between detector tomography and weighted state exclusion. 
The dashed curves illustrate the robust bounds obtained for preparation uncertainty $\epsilon=0.05$. Even in this case, the bounds remain below $1$ over a substantial range of $\mu$, demonstrating that disturbance can still be certified despite moderate state-preparation errors.

While tomography aims to reconstruct the complete POVM, disturbance certification only requires enough statistical information to construct a feasible dual solution of the optimisation problem \eqref{eq:dual_fidelity_sdp}. Consequently, as it is demonstrated by the above example, incomplete measurement statistics can already suffice.

\begin{figure}
    \centering
    \includegraphics[width=\linewidth]{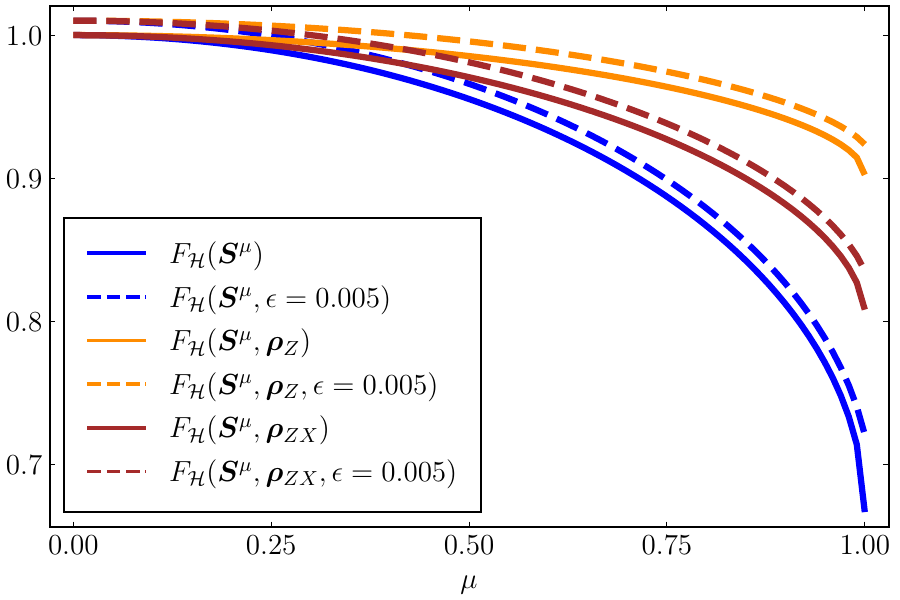}
    \caption{Weighted state exclusion with incomplete state preparations. The lowest solid curve shows the exact statistical disturbance bound for the noisy SIC POVM defined in Eq.~\eqref{eq:noisy_sic}. The upper solid curves show the experimentally accessible weighted state exclusion bounds obtained using only the preparation sets $\bm\rho_Z$ and $\bm\rho_{ZX}$. Neither preparation set is sufficient for detector tomography, yet both certify nonzero measurement disturbance. Dashed curves show the corresponding robust bounds for preparation uncertainty $\epsilon=0.005$.}
    \label{fig:sic_disturbance}
\end{figure}

\subsection{Non-optimality of the Lüders instrument for general state ensembles}
\label{sec:amplitude_damping}

So far, we have focused on measurement disturbance with respect to the Haar ensemble. We now show that considering different input ensembles reveals a qualitatively new behaviour.
As an illustration, we consider a measurement whose statistics arise from an amplitude damping channel followed by a computational-basis measurement. In particular, we demonstrate that the Lüders instrument is generally not the least disturbing instrument once the input ensemble differs from the Haar ensemble. 

For a damping parameter $\mu\in[0,1]$, the qubit amplitude damping channel is defined by $\mathcal{C}_{\mu}:\Herm{2} \rightarrow\Herm{2}$ acting as
\begin{equation*}
    \mathcal{C}_{\mu}[\rho] = K_1 \rho K_1^{\dagger} + K_2 \rho K_2^{\dagger}
\end{equation*}
where the Kraus operators $K_1$ and $K_2$ are 
\begin{align}
    K_1 = \begin{pmatrix}
        1 & 0 \\
        0 & \sqrt{\mu}
    \end{pmatrix}, \;
    K_2 = \begin{pmatrix}
        0 & \sqrt{1-\mu} \\
        0 & 0 
    \end{pmatrix}. \nonumber
\end{align}
The corresponding measurement is described by the POVM $\bm{B}^{\mu} = \{B_{0}^{\mu}, B_{1}^{\mu}\}$ whose effects are given by 
\begin{equation}
    \label{eq:amplitude_damping_measurement}
    B_{b}^{\mu} := \mathcal{C}_{\mu}^{\dagger}[\ketbra{b}],
\end{equation}
so that they explicitly take the form
\begin{align}
    B_0^{\mu} &= \begin{pmatrix}
        1 & 0 \\
        0 & 1-\mu
    \end{pmatrix}, \;
    B_1^{\mu} = \begin{pmatrix}
        0 & 0 \\
        0 & \mu 
    \end{pmatrix}. \nonumber
\end{align}
The associated Lüders channel $$\mathcal{B}_{L}^{\mu}(\rho) = \sum_{b} \sqrt{B_{b}^{\mu}} \rho \sqrt{B_{b}^{\mu}}$$ is therefore a partially dephasing channel 
\begin{equation}
    \label{eq:ad_lüders_channel}
    \mathcal{B}_{L}^{\mu}(\rho) = \begin{pmatrix}
        \rho_{00} & \sqrt{1-\mu} \rho_{01} \\
        \sqrt{1-\mu} \rho_{10} & \rho_{11}.
    \end{pmatrix}
\end{equation}
This channel preserves the projection on the $\sigma_z$-axis while attenuating the coherences by the factor $\sqrt{1-\mu}$. Consequently, the disturbance depends strongly on the latitude of the input state on the Bloch sphere. 
Although Theorem~\ref{thm:lüders_optimality} guarantees that the Lüders instrument is optimal for the Haar ensemble, we will show that it becomes suboptimal for most of the latitude ensembles.

To demonstrate this, we consider the latitude ensembles $\mathcal E_\theta$, consisting of uniformly distributed pure states with fixed polar angle $\theta$ on the Bloch sphere, see Fig.~\ref{fig:latitude_bloch_sphere}.
\begin{figure}
    \centering
    \includegraphics[width=0.6\linewidth]{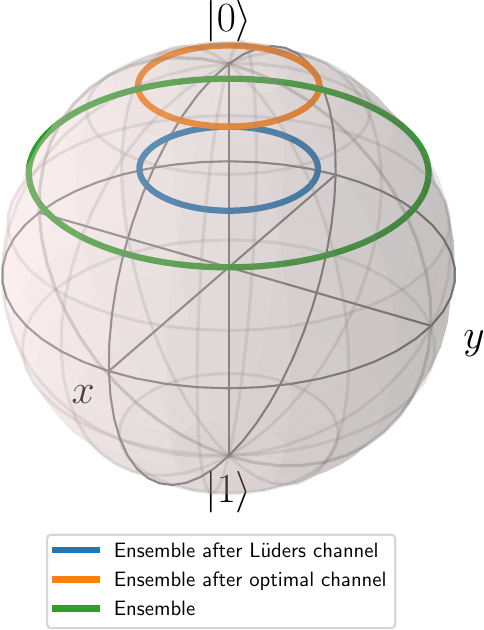}
    \caption{The latitude ensemble $\mathcal{E}_{\theta}$ for a polar angle $\theta$ in the northern hemisphere is represented by a ring on the Bloch sphere at fixed latitude. The Lüders instrument associated with the amplitude-damped measurement acts as a partially dephasing channel, reducing the radius of the ring while leaving its polar angle unchanged. By contrast, concatenating the Lüders instrument with an outcome-dependent bit flip shifts the ring further towards the north pole, resulting in a higher average fidelity between the input and output states.}
    \label{fig:latitude_bloch_sphere}
\end{figure}
Equivalently, these are the pure states satisfying $\bra{\psi} \sigma_z \ket{\psi} = \cos(\theta)$. Unlike the Haar ensemble, latitude ensembles encode prior information about the likely input states. This additional structure allows the post-measurement state update to be tailored to the ensemble, potentially increasing the average fidelity.

To compute the statistical disturbance bound with respect to the ensemble $\mathcal{E}_{\theta}$, we first notice that the ensemble operator $R(\mathcal{E}_{\theta})$, see Eq.~\eqref{eq:ensemble_operator}, is expressed in the computational basis via 
\begin{align}           
    &R(\mathcal{E}_{\theta}) = 2[\cos^4(\theta/2) \ketbra{00} + \sin^4(\theta/2)\ketbra{11} \nonumber \\
    &+ (\sin(\theta/2)\cos(\theta/2))^2(\ketbra{01}+\ketbra{10} \nonumber\\
    &+ \ketbra{00}{11} + \ketbra{11}{00})], \label{eq:latitude_ensemble_operator}
\end{align}
see Appendix \ref{app:latitude} for the derivation.
Using Eq.~\eqref{eq:average_fidelity_choi} together with Eq.~\eqref{eq:latitude_ensemble_operator}, the average fidelity of the Lüders channel \eqref{eq:ad_lüders_channel} with respect to the latitude ensemble becomes
\begin{equation*}
     F_{\mathcal{E}_{\theta}}(\mathcal{B}_{L}^{\mu}) = \frac{1}{2}(1+ \cos^2(\theta) + \sqrt{1-\mu} \sin^2(\theta)),
\end{equation*}
see Figure \ref{fig:latitude_ensemble_amplitude_damping}. 
We now compare this with the instrument $\bm{\mathcal{I}}^{\mu, \text{flip}}$ obtained by applying an outcome-dependent bit flip whenever the outcome $b=1$ is observed.
Its instrument elements are
\begin{align}
    \mathcal{I}_{0}^{\mu,\text{flip}}(\rho) &= \sqrt{B_{0}^{\mu}} \rho \sqrt{B_{0}^{\mu}} \nonumber\\
    \mathcal{I}_{1}^{\mu,\text{flip}}(\rho) &=  \sigma_{x} \sqrt{B_{1}^{\mu}} \rho \sqrt{B_{1}^{\mu}} \sigma_x. \nonumber
\end{align}
For the corresponding channel $\mathcal{B}_{\text{flip}}^{\mu} = \sum_b \mathcal{I}_{b}^{\mu,\text{flip}}(\rho)$ the average fidelity is computed as
\begin{align}
    F_{\mathcal{E}_{\theta}}(\mathcal{B}_{\text{flip}}^{\mu}) &= \frac{1}{2}(1+ \cos^2(\theta)(1-\mu) \nonumber  \\ &+ \cos(\theta)\mu + \sqrt{1-\mu} \sin^2(\theta)).\label{eq:flip_fidelity}
\end{align}
Since $\cos(\theta) > \cos(\theta)^2$ for $\theta \in (0,\pi/2)$, Eq.~\eqref{eq:flip_fidelity} immediately implies $F_{\mathcal{E}_{\theta}}(\mathcal{B}_{\text{flip}}^{\mu}) > F_{\mathcal{E}_{\theta}}(\mathcal{B}_{L}^{\mu})$ in this range of polar angles. Hence, for latitude ensembles in the northern hemisphere, the Lüders instrument is no longer the least disturbing realisation of the POVM.

\begin{figure}
    \centering
    \includegraphics[width=\linewidth]{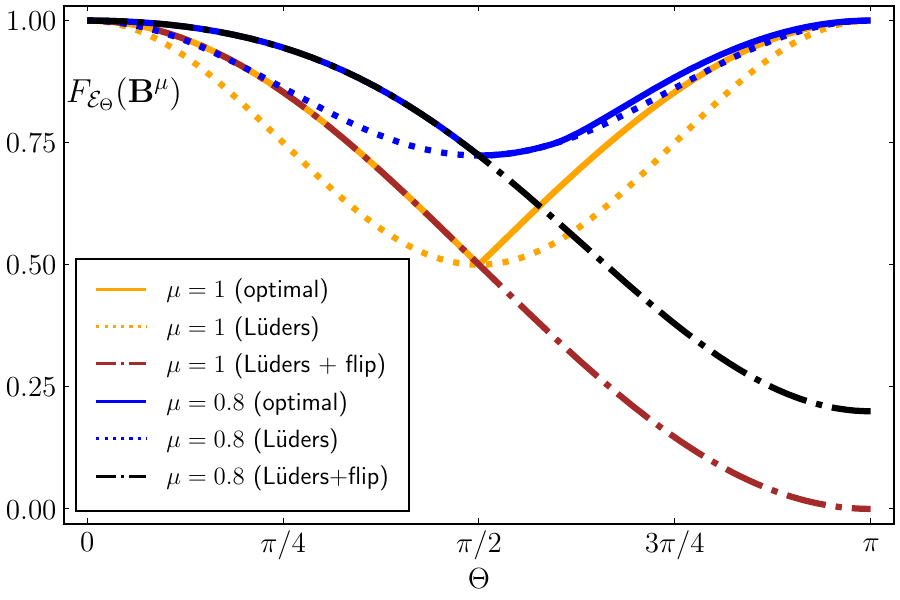}
    \caption{Statistical disturbance bounds for latitude ensembles. The statistical disturbance bound of the amplitude-damped measurement in Eq.~\eqref{eq:amplitude_damping_measurement} is shown as a function of the polar angle $\theta$. For latitude ensembles in the northern hemisphere, the Lüders instrument is no longer optimal. Instead, the instrument augmented with an outcome-dependent bit flip achieves the optimal average fidelity.}
\label{fig:latitude_ensemble_amplitude_damping}
\end{figure}

Figure~\ref{fig:latitude_ensemble_amplitude_damping} confirms that the Lüders instrument fails to achieve the statistical disturbance bound over a broad range of latitude ensembles.
In particular, throughout the northern hemisphere ($\theta\in[0,\pi/2]$), the optimal value is attained by the bit-flip instrument described by Eq.~\eqref{eq:flip_fidelity}.

The physical origin of this behaviour is simple. Whenever the outcome $b=1$ occurs, the Lüders instrument prepares the state $\ketbra{1}$. For ensembles concentrated in the northern hemisphere, however, the state $\ketbra{0}$ is, on average, considerably closer to the input states than $\ketbra{1}$. Applying an outcome-dependent bit flip therefore increases the average input-output fidelity, demonstrating that the Lüders instrument is not universally optimal outside the setting of Haar distributed input states. 

\section{Disturbance bounds in quantum randomness generation}
\label{sec:randomness}

In the previous sections, we studied how the observed measurement statistics constrain the disturbance of any compatible measurement channel with respect to a given input ensemble. It is equally natural to consider the converse perspective and to ask what can be inferred about an unknown measurement process if its disturbance is known to be small.
Questions of this type arise naturally in quantum random number generation \cite{Mannalath2023} and quantum randomness extraction \cite{Berta2014}, where the observed disturbance induced by an unknown device can be used to limit the information available to a potential eavesdropper.
In these protocols, Eve's knowledge about the generated random string is commonly quantified by her guessing probability $p_g$.
Depending on the assumptions about the preparation and measurement devices, a variety of trust models have been studied \cite{Law2014}.
To illustrate how disturbance bounds can be related to randomness certification, we consider the following simple fully trusted prepare-and-measure protocol.
\begin{enumerate}
    \item Alice prepares states drawn uniformly from the ensemble $\mathcal{E} = \{\ket{i}\}_{i=0}^{d-1}$ and sends them to Bob through an uncharacterised quantum channel $\mathcal{C}$.
    \item Bob generates a random bit $c \in \{0,1\}$. If $c=0$, then he measures in the basis $\{\ket{i}\}_{i=0}^{d-1}$ and, if $c=1$, he performs the POVM $\{A_a\}_{a=0}^{m-1}$.
    \item Alice and Bob compare their preparations and measurement settings and outcomes. 
    \item Using the rounds with $c=0$, Alice and Bob estimate a lower bound
    \begin{equation*}
        f \leq F_{\mathcal{E}}(\mathcal{C}) = \frac{1}{d} \sum_{i=0}^{d-1} \bra{i} \mathcal{C}(\ketbra{i})\ket{i}.
    \end{equation*}
    on the average fidelity of the channel with respect to the computational basis ensemble.
    \item If $c=1$, the outcomes $a \in \{0,\dots,m-1\}$ of Bob's measurement are used to generate the random string.
\end{enumerate}
The observed fidelity bound $f$ restricts the class of channels compatible with the experimental data and therefore limits the strategies available to an eavesdropper.
Since the dimension $d$ of Alice's and Bob's apparata are fixed, Eve's most general guessing strategy is described as a quantum instrument $\bm{\mathcal{I}}$ with elements $\mathcal{I}_a:\Herm{d} \rightarrow \Herm{d}$ and outcomes $a \in \{0,\dots,m-1\}$. 
Since the instrument implements the unknown channel $\mathcal C$, its elements must sum to $\mathcal C$, whose average fidelity has been certified to satisfy
$$F_{\mathcal{E}}(\mathcal{C}) \geq f.$$
Eve's optimal guessing probability is therefore obtained by solving the SDP
\begin{align}
    p_{g}(f)=\sup_{\bm{\mathcal{I}}} &\frac{1}{d}\sum_{i = 0}^{d-1} \sum_{a=0}^{m-1}  \Tr(\mathcal{I}_{a}(\ketbra{i}) A_a) \label{eq:guessing_prob}\\
     \text{s.t.} \;  &\mathcal{I}_{a} \in \text{CP}(d,d),\sum_{a} \mathcal{I}_{a}^{\dagger}(\mathds{1}_d) = \mathds{1}_d/d \nonumber \\
    &\sum_a F_{\mathcal{E}}(\mathcal{I}_a) \geq f. \nonumber
    \end{align}
Figure~\ref{fig:fidelity_vs_guessing_prob} shows the resulting guessing probability for the case where Bob generates randomness by measuring in the Fourier basis,
\begin{equation*}
    \ket{\phi_a} = \frac{1}{\sqrt{d}} \sum_{l=0}^{d-1} e^{2 \pi i a l/d} \ket{l}
\end{equation*}
for $a \in \{0,\dots,d-1\}$. 
As expected, increasing the Hilbert-space dimension reduces the fidelity required to keep Eve's guessing probability below a given threshold.

We emphasize that the above protocol merely serves as a proof of principle illustrating how disturbance bounds can be incorporated into quantum randomness generation protocols. Developing practical device-independent or semi-device-independent randomness extraction protocols based on the statistical disturbance bound is an interesting direction for future work.
\begin{figure}[t]
    \centering    \includegraphics[width=\linewidth]{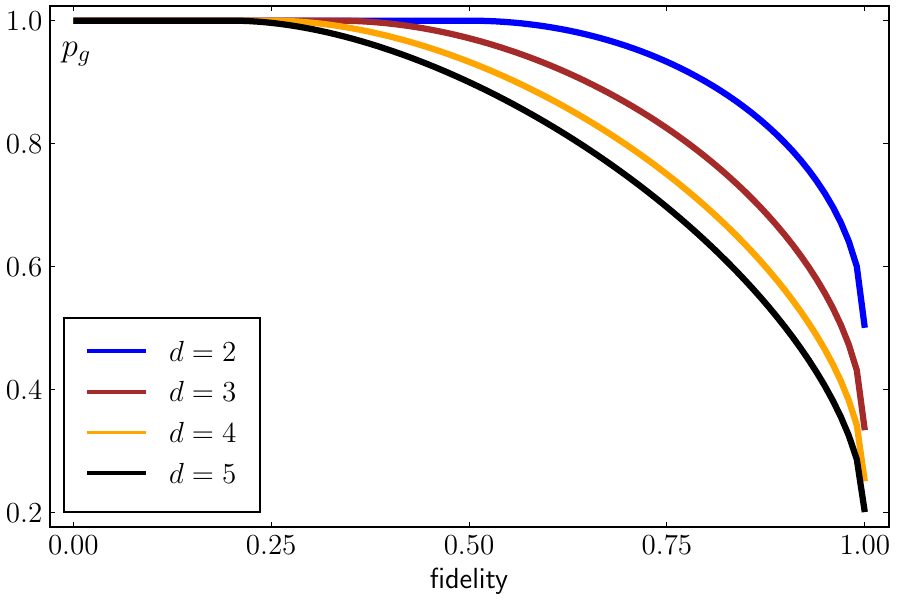}
    \caption{Guessing probability versus average fidelity. Eve's optimal guessing probability $p_g$, obtained from the SDP in Eq.~\eqref{eq:guessing_prob}, is shown as a function of the experimentally certified average fidelity with respect to the computational basis. Bob generates the random string by measuring in the Fourier basis. Higher-dimensional systems require a lower certified fidelity to achieve the same upper bound on Eve's guessing probability.}
\label{fig:fidelity_vs_guessing_prob}
\end{figure}

\section{Conclusion}

In this work, we introduced the statistical disturbance bound, which quantifies how the statistical description of a quantum measurement fundamentally limits the disturbance that any compatible measurement channel can induce with respect to an arbitrary ensemble of input states. We showed that this bound can be computed efficiently by means of semidefinite programming whenever the POVM describing the measurement is known. Moreover, we introduced the weighted state exclusion technique, which allows for an experimental estimation of the statistical disturbance bound without requiring a tomographic reconstruction of the measurement.

Our results demonstrate that the statistical disturbance bound provides substantially tighter predictions than existing information-disturbance relations and is capable of distinguishing measurements that are statistically equivalent with respect to their informativeness but differ significantly in their disturbance properties. Furthermore, we showed that the Lüders instrument is not universally the least-disturbing implementation of a measurement, but that its optimality depends on the underlying ensemble of input states. Finally, we illustrated how disturbance bounds with respect to particular state ensembles can be incorporated into a simple quantum randomness generation protocol, where the observed disturbance yields an upper bound on the guessing probability of a potential eavesdropper.

Several interesting directions remain for future research. An important question is how the framework developed in this work can be employed to design quantum measurements that simultaneously certify desired physical properties while introducing as little disturbance as possible. It would also be interesting to investigate statistical disturbance bounds for more general classes of measurements and state ensembles, as well as to explore their role in practical quantum communication and cryptographic protocols. Finally, the here developed connection between measurement disturbance and certain state exclusion tasks may offer a novel approach for studying measurement disturbance in continuous-variable systems, in analogy to the recently developed techniques to quantify resources in CV systems via quantum information tasks \cite{Haapasalo2021, Regula2021, kuramochi2020}.   

\section*{Acknowledgements}
We would like thank Paul Skrzypczyk for fruitful discussions.
We acknowledge the support from the Swedish Research Council [grant number 2024-05341] and the Wallenberg Initiative on Networks and Quantum Information (WINQ).

\section*{Code availabilty}
The source code for the implementation of the optimisation problems and the scripts used to generate the figures presented in this work are publicly available through our GitLab repository~\cite{github_repo}.
\newpage

\appendix

\section{Proof of Theorem \ref{thm:lüders_optimality}}
\label{app:lüders_optimality}

In this appendix, we prove the optimality of the Lüders instrument for disturbance minimisation with respect to the Haar ensemble, as stated in Theorem~\ref{thm:lüders_optimality}. The proof relies on the following technical lemma, which we establish first.
\begin{Lemma}
    \label{lemma:trace_inverse}
    Let $Y \in \textnormal{Herm}(d)$ be a hermitian matrix and let $\Phi^{+} = \ketbra{\phi^+} =\frac{1}{d}\sum_{i,j=0}^{d-1} \ketbra{i}{j} \otimes \ketbra{i}{j} \in \textnormal{Herm}(d^2)$ be the maximally entangled state. Then, it holds that 
    \begin{equation}
        Y \otimes \mathds{1}_d \succcurlyeq \Phi^{+} \label{eq:harmonic_phi_condition}
    \end{equation}
    if and only if 
    \begin{equation}
        \label{eq:harmonic_inverse_condition}
        Y \succ 0 \textnormal{ and } \Tr(Y^{-1}) \leq d.
    \end{equation}
\end{Lemma}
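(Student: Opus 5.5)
The plan is to reduce the operator inequality \eqref{eq:harmonic_phi_condition} to a scalar condition, using that $\Phi^{+}$ is a rank-one projector. For a rank-one operator $\ketbra{\phi}$ and a Hermitian $M$, one has the standard equivalence
\[
M \succcurlyeq \ketbra{\phi} \iff M \succcurlyeq 0,\ \ket{\phi}\in\operatorname{supp}(M),\ \bra{\phi}M^{+}\ket{\phi}\le 1,
\]
where $M^{+}$ denotes the Moore--Penrose pseudoinverse. When $M\succ0$ this is the Schur complement criterion applied to $\begin{pmatrix} M & \ket{\phi}\\ \bra{\phi} & 1\end{pmatrix}\succcurlyeq 0$. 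I will apply this with $M=Y\otimes\mathds{1}_d$ and $\ket{\phi}=\ket{\phi^+}$. The useful facts are that $(Y\otimes\mathds{1}_d)^{-1}=Y^{-1}\otimes\mathds{1}_d$ and that
\[
\bra{\phi^+}(X\otimes\mathds{1}_d)\ket{\phi^+}=\Tr(X)/d .
\]

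\textbf{Direction ``$\Leftarrow$''.} Assume $Y\succ0$ and $\Tr(Y^{-1})\le d$. Then $M=Y\otimes\mathds{1}_d\succ0$, and
\[
\bra{\phi^+}M^{-1}\ket{\phi^+}=\Tr(Y^{-1})/d\le 1 .
\]
The Schur complement criterion then gives $M\succcurlyeq\Phi^{+}$. To keep the argument self-contained, I can instead write $M^{-1/2}\Phi^{+}M^{-1/2}=\ketbra{\chi}$ with $\ket{\chi}=M^{-1/2}\ket{\phi^+}$. Since $\braket{\chi}=\Tr(Y^{-1})/d\le1$, this operator is $\preccurlyeq\mathds{1}$. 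Conjugating by $M^{1/2}$ yields the claim.

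\textbf{Direction ``$\Rightarrow$''.} Assume $Y\otimes\mathds{1}_d\succcurlyeq\Phi^{+}$. Then $Y\otimes\mathds{1}_d\succcurlyeq0$, so $Y\succcurlyeq0$. This is the step needing some care: I must show strict positivity. Suppose $Y\ket{v}=0$ for some unit vector $\ket{v}$. Test the inequality on $\ket{v}\otimes\ket{w}$ for arbitrary $\ket{w}$. The left side gives $0$, while the right side gives
\[
|\bra{\phi^+}(\ket{v}\otimes\ket{w})|^2=|\braket{\bar v}{w}|^2/d .
\]
Choosing $\ket{w}=\ket{\bar v}$ makes this positive, which is a contradiction. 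Hence $Y\succ0$ and $M$ is invertible.

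Now conjugate by $M^{-1/2}$ to get $\mathds{1}\succcurlyeq\ketbra{\chi}$, which forces $\braket{\chi}\le1$. Since $\braket{\chi}=\Tr(Y^{-1})/d$, this is exactly $\Tr(Y^{-1})\le d$.

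The only genuinely nontrivial point is the invertibility argument in ``$\Rightarrow$''. It hinges on $\ket{\phi^+}$ having full Schmidt rank, so that it has nonzero overlap with $\ker(Y)\otimes\mathbb{C}^d$ whenever $\ker(Y)\neq\{0\}$. The rest is the routine rank-one Schur complement computation together with the identity $\bra{\phi^+}(X\otimes\mathds{1}_d)\ket{\phi^+}=\Tr(X)/d$.
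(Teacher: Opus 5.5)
Your proof is correct and follows essentially the same route as the paper. Positivity of $Y$ is forced by testing the inequality on product vectors built from an eigenvector of $Y$ with eigenvalue $\le 0$: the paper uses $\ket{\psi_i}\otimes\ket{l}$, and you use a kernel vector $\ket{v}\otimes\ket{\bar v}$. The trace condition then comes from reducing to the rank-one fact $\ketbra{\chi}\preccurlyeq\mathds{1}$ iff $\braket{\chi}\le 1$; the paper phrases this via the test vector $(Y^{-1}\otimes\mathds{1}_d)\ket{\phi^+}$ and a Cauchy--Schwarz step rather than your explicit congruence by $(Y\otimes\mathds{1}_d)^{\mp 1/2}$, but the computation is the same.
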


\begin{proof}
    We first prove the implication \eqref{eq:harmonic_phi_condition} $\Rightarrow$ \eqref{eq:harmonic_inverse_condition}
    and start by showing that $Y \succ 0$, i.e, that $Y$ can only have positive eigenvalues. We proceed by contradiction. Suppose that $Y$ admits a spectral decomposition
    \begin{equation*}
        Y = \sum_{i=0}^{d-1} y_i \ketbra{\psi_i}
    \end{equation*}
    such that $y_i \leq 0$ for some $i$. Then it also holds true that
    \begin{equation*}
        \bra{\psi_i, l} Y \otimes \mathds{1}_{d} \ket{\psi_i,l} \leq 0 
    \end{equation*}
    for all $l \in \{0,\dots,d-1\}$, where we used the standard notation $\ket{\psi_i,l} := \ket{\psi_i} \otimes \ket{l}$. On the other hand, one has that
    \begin{equation*}
        \bra{\psi_i, l} \Phi^+ \ket{\psi_i,l} = \frac{1}{d} \abs{\braket{\psi_i} {l}}^2,
    \end{equation*}
    which is strictly positive for at least one $l \in \{0,\dots,d-1\}$ by the completeness of the orthonormal basis $\{\ket{i}\}$. This contradicts the assumed inequality \eqref{eq:harmonic_phi_condition}.
    
   We next show that $\Tr(Y^{-1}) \leq d$. To see this, we consider the vector $\ket{v} = Y^{-1} \otimes \mathds{1}_d \ket{\phi^{+}}$ and calculate 
    \begin{align}
        &\bra{v} Y\otimes \mathds{1}_{d} - \Phi^+ \ket{v}  \nonumber\\  
        &= \frac{1}{d} \Tr(Y^{-1}) - \left(\frac{1}{d} \Tr(Y^{-1})\right)^2 \nonumber
    \end{align}
    which is nonnegative only if $\Tr(Y^{-1}) \leq d$. Here, we have used the well-known identity
    \begin{equation}
        \label{eq:max_ent_identity}
        \bra{\phi^+} A \otimes B \ket{\phi^+} = \frac{1}{d}\tr(A^T B).
    \end{equation}
    This completes the proof of the implication \eqref{eq:harmonic_phi_condition} $\Rightarrow$ \eqref{eq:harmonic_inverse_condition}.

    To prove the opposite direction  \eqref{eq:harmonic_inverse_condition} $\rightarrow$ \eqref{eq:harmonic_phi_condition}, we first note that the inequality in \eqref{eq:harmonic_inverse_condition} is, using the identity \eqref{eq:max_ent_identity}, equivalent to 
    \begin{equation}
        \label{eq:trace_inverse_phi}
        \bra{\phi^+} Y^{-1} \otimes \mathds{1}_d \ket{\phi^+} \leq 1. 
    \end{equation}
    Next, let $\ket{u} \in \mathbb{C}^d \otimes \mathbb{C}^d$ be an arbitrary vector.
    Multiplying both sides of \eqref{eq:trace_inverse_phi} by
 $\braket{u}$ and rearranging the terms yields
    \begin{align}
          0 &\leq \braket{u}{u} - \braket{u}{u} \bra{\phi^+} Y^{-1} \otimes \mathds{1}_d \ket{\phi^+} \nonumber\\
          &\leq \braket{u}{u} - \abs{\bra{u} Y^{-1/2} \otimes \mathds{1}_d \ket{\phi^+}}^2, \nonumber
    \end{align}
    where the second inequality from the Cauchy-Schwarz inequality. Since this inequality holds for every vector $\ket{u}$, it is equivalent to the operator inequality
    \begin{equation*}
        \mathds{1}_{d^2} \succcurlyeq (Y^{-1/2} \otimes \mathds{1}_d) \Phi^{+}  (Y^{-1/2} \otimes \mathds{1}_d),
    \end{equation*}
    which further implies the operator inequality \eqref{eq:harmonic_phi_condition}.
    \end{proof}
    
We are now in a position to prove Theorem \ref{thm:lüders_optimality} that stated the following.
\begin{theoremcopy}
    Let $\bm{A} = \{A_{a}\}_{a=0}^{m-1} \in \Herm{d}$ be a $d$-dimensional POVM. 
    Then, the statistical disturbance bound $F_{\mathcal{H}}(\bm{A})$ with respect to the Haar ensemble $\mathcal{H}$ is given by 
    \begin{equation}
        \label{eq:lüders_value_appendix}
        F_{\mathcal{H}}(\bm{A}) = \frac{\frac{1}{d} \left[\sum_{a=0}^{m-1} \Tr (\sqrt{A_a})^2\right]+1}{d+1}.
    \end{equation}
\end{theoremcopy}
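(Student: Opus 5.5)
Lower bound: the Lüders instrument is feasible for the SDP \eqref{eq:primal_fidelity_sdp}, and by Eq.~\eqref{eq:haar_fidelity_of_meas} it attains the right-hand side of \eqref{eq:lüders_value_appendix}. It therefore remains to show the matching upper bound. For this I will exhibit feasible dual variables $Y_a$ for \eqref{eq:dual_fidelity_sdp}--\eqref{eq:dual_constraints} whose objective equals the same value; weak duality then closes the argument.

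Since $R(\mathcal{H}) = \frac{1}{d+1}(\mathds{1}_{d^2} + d\,\Phi^+)$, I make the ansatz $Y_a^T = \frac{1}{d+1}(\mathds{1}_d + d\,Z_a)$. With this choice, the dual constraint $Y_a^T\otimes\mathds{1}_d \succcurlyeq R(\mathcal{H})$ is equivalent to $Z_a\otimes\mathds{1}_d \succcurlyeq \Phi^+$. By Lemma~\ref{lemma:trace_inverse}, this holds if and only if $Z_a\succ 0$ and $\Tr(Z_a^{-1})\le d$. The dual objective becomes
\[
\frac{1}{d}\sum_a \Tr(Y_aA_a)=\frac{1}{d(d+1)}\Big(d+d\sum_a\Tr(Z_a^TA_a)\Big)=\frac{1+\sum_a\Tr(Z_a^TA_a)}{d+1}.
\]
It therefore suffices, for each $a$ separately, to find $Z_a\succ0$ with $\Tr(Z_a^{-1})\le d$ and $\Tr(Z_a^T A_a)=\frac1d\Tr(\sqrt{A_a})^2$.

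The natural choice comes from minimising $\Tr(W A)$ subject to $\Tr(W^{-1})\le d$, writing $W=Z_a^T$ (transposition preserves the trace of the inverse). The Lagrange condition $A = \lambda W^{-2}$ gives $W\propto A^{-1/2}$. I therefore set
\[
Z_a^T = \frac{\Tr(\sqrt{A_a})}{d}\,A_a^{-1/2}.
\]
Then $\Tr(Z_a^{-1}) = d\,\Tr(\sqrt{A_a})/\Tr(\sqrt{A_a}) = d$, and $\Tr(Z_a^TA_a)=\frac{1}{d}\Tr(\sqrt{A_a})^2$, as required. Summing over $a$ yields exactly \eqref{eq:lüders_value_appendix}.

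The main obstacle is that $A_a$ may be singular, so $A_a^{-1/2}$ need not exist; if $A_a=0$, the formula degenerates entirely. I would handle this with a regularisation: take $Z_a^T(\delta)=\frac{\Tr\sqrt{A_a+\delta\mathds{1}}}{d}(A_a+\delta\mathds{1})^{-1/2}$ for $\delta>0$. This is positive definite with $\Tr(Z_a^{-1})=d$, so it is feasible. Its objective is
\[
\tfrac1d\Tr\sqrt{A_a+\delta\mathds{1}}\;\Tr\!\big(A_a(A_a+\delta\mathds{1})^{-1/2}\big).
\]
Computing eigenvalue by eigenvalue, the second factor is $\sum_i \lambda_i/\sqrt{\lambda_i+\delta}$, which tends to $\Tr\sqrt{A_a}$ as $\delta\to0$, since zero eigenvalues contribute $0$. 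The first factor is bounded and converges to $\Tr\sqrt{A_a}$ as well. The dual objective therefore converges to the claimed value. Because the dual is an infimum, weak duality gives $F_{\mathcal H}(\bm A)\le$ the right-hand side, which completes the proof. The proof also shows that the Lüders instrument is optimal.
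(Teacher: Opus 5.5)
Your proof is correct and follows the paper's argument. It uses the Lüders instrument for the lower bound and the dual point $Y_a=\frac{\mathds{1}_d}{d+1}+\frac{\Tr(\sqrt{A_a})}{d+1}A_a^{-1/2}$, whose feasibility is certified by Lemma~\ref{lemma:trace_inverse} with $\Tr(Z_a^{-1})=d$ saturated. The only difference is how singular effects are handled: you regularise with $A_a+\delta\mathds{1}_d$ and let $\delta\to0$, while the paper adds $rQ_a$ on the kernel, rescales the coefficient, and lets $r\to\infty$; both limits are valid.
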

\begin{proof}
    The value on the right-hand side of Eq.~\eqref{eq:lüders_value_appendix} is attained by the Lüders instrument $\mathcal{I}_{a}(\rho) = \sqrt{A_a} \rho \sqrt{A_a}$, and therefore provides a lower bound on $F_{\mathcal{H}}(\bm{A})$. To show that this lower bound is optimal, we use the dual semidefinite program in Eq.~\eqref{eq:dual_fidelity_sdp}. For a single measurement effect $A_a \in \bm{A}$, it reads
    \begin{align}
    &F_{\mathcal{H}}(A_a) = \inf_{Y_a}  \frac{1}{d}  \Tr(Y_a A_a)  \label{eq:dual_feasible_haar_objective_app}\\
    &\text{s.t. }   Y_a^T \otimes \mathds{1}_d \succcurlyeq R(\mathcal{\mathcal{H}}) = \frac{\mathds{1}_{d^2} + d\ketbra{\phi^{+}}}{d+1}. \label{eq:dual_feasible_haar_constrained_app}
    \end{align}
    We next construct a dual feasible solution attaining the value
    \begin{equation*}
        F_a^{*} = \frac{1}{d(d+1)}\left(\Tr(A_a) + \Tr(\sqrt{A_a})^2 \right).
    \end{equation*}
    Since $\sum_{a} F_a^{*}$ coincides with the right-hand side of Eq.~\eqref{eq:lüders_value_appendix}, weak duality implies that proving dual feasibility establishes optimality. 
    To show that $F_a^{*}$ can be attained, we consider the family of operators $Y_{a,r}$ defined by 
    \begin{equation*}
        Y_{a,r} = \frac{\mathds{1}_d}{d+1} + \frac{\Tr(\sqrt{A_a})A_a^{-1/2}}{d+1-q_a/r} + rQ_a
    \end{equation*}
    where $Q_{a}$ is the orthogonal projector onto the kernel of $A_a$, $q_a:=\Tr(Q_{a})$ is the dimension of the kernel of $A_a$, $A_{a}^{-1/2}$ is the Moore-Penrose pseudo-inverse on the support of $A_a^{1/2}$ and $r$ is any number such that $r > \frac{q_a}{d+1}$. In the case that $A_a$ has full rank, the pseudo-inverse is the usual inverse and $Q_a = 0$. Since $Q_a$ projects onto the kernel of $A_a$, one has $\Tr(A_a Q_a) = 0$ and hence we obtain
    \begin{equation*}
        \frac{1}{d}\Tr(Y_{a,r}A_{a}) = \frac{1}{d(d+1)}\left[\Tr(A_a) + \frac{\Tr(\sqrt{A_a})^2}{1-\frac{q_a}{r(d+1)}}\right]
    \end{equation*}
    which converges monotonically to $F_a^{*}$ from above as $r \rightarrow \infty$. It remains to verify that $Y_{a,r}$ satisfies the dual constraint \eqref{eq:dual_feasible_haar_constrained_app} for all $r > \frac{q_a}{d+1}$. Hence, by rearranging the terms in the inequality \eqref{eq:dual_feasible_haar_constrained_app}, it remains to be shown that 
    \begin{equation*}
        \left(\frac{d+1}{d}\right) \left(Y_{a,r}^{T} - \frac{\mathds{1}_d}{d+1}\right) \otimes \mathds{1}_d \succcurlyeq \ketbra{\phi^{+}}. 
    \end{equation*}
    Since $Y_{a,r}$ is positive definite, Lemma~\ref{lemma:trace_inverse} shows that this operator inequality is equivalent to
    \begin{align}
        &\frac{d}{d+1}\Tr[\left(Y_{a,r} - \frac{\mathds{1}_{d}}{d+1}\right)^{-1}] \leq d. \label{eq:y_a,r_inverse_ineq}
    \end{align}
   A direct calculation shows that the left-hand side of \eqref{eq:y_a,r_inverse_ineq} is equal to $d$, and hence the inequality is saturated. Consequently, $\sum_{a} F_a^{*}$ is both a lower and an upper bound for $F_{\mathcal{H}}(\bm{A})$ which completes the proof.
\end{proof}

\section{Derivation of the ensemble operator of the latitude ensemble}
\label{app:latitude}
In this appendix, we derive Eq.~\eqref{eq:latitude_ensemble_operator} for the ensemble operator $R(\mathcal{E}_{\theta})$ associated with the latitude ensemble containing the pure qubit states that fulfill $\bra{\psi}\sigma_z  \ket{\psi} = \cos(\theta)$ for a fixed polar angle $\theta \in [0,\pi]$. 
The states in this ensemble are precisely the pure qubit states $\ketbra{\theta, \phi}$ given by 
\begin{align}
    \begin{pmatrix}
        \cos^2(\theta/2) & \sin(\theta/2) \cos(\theta/2) e^{-i\phi} \\
        \sin(\theta/2) \cos(\theta/2) e^{i\phi} & \sin^2(\theta/2)
    \end{pmatrix}
    \nonumber
\end{align}
with an arbitrary azimuthal angle $\phi \in [0,2\pi]$.

To compute the ensemble operator $R(\mathcal{E}_{\theta})$, we evaluate
\begin{equation}
    \label{eq:latitude_integral}
    R(\mathcal{E}_{\theta}) = 2\int d\mu_{\mathcal{E}_{\theta}} \ketbra{\theta',\phi}^T \otimes  \ketbra{\theta',\phi}
\end{equation}
where the probability measure $\mu_{\mathcal{E}_{\theta}}$ is defined via
\begin{equation}
    \label{eq:latitude_measure_integration}
    \int d\mu_{\mathcal{E}_{\theta}}\, f(\theta', \phi)  = \frac{1}{2\pi} \int_{0}^{2\pi} d\phi \, f(\theta,\phi) 
\end{equation}
for any function $f(\theta', \phi)$ on the Bloch sphere. Many matrix elements $\bra{ij} R(\mathcal{E}_{\theta})\ket{kl}$ vanish due to the identity 
\begin{equation*}
    \frac{1}{2\pi} \int_{0}^{2\pi} d\phi \, e^{i k \phi} = \delta_{k0}.
\end{equation*}
Evaluating the integral in Eq.~\eqref{eq:latitude_integral} then yields
\begin{align}           
    &R(\mathcal{E}_{\theta}) = 2[\cos^4(\theta/2) \ketbra{00} + \sin^4(\theta/2)\ketbra{11} \nonumber \\
    &+ (\sin(\theta/2)\cos(\theta/2))^2(\ketbra{01}+\ketbra{10} \nonumber\\
    &+ \ketbra{00}{11} + \ketbra{11}{00})].
\end{align}

\printbibliography

\end{document}